\title{An Almost-Optimal Upper Bound on the Push~Number of the Torus Puzzle}
\author{Matteo Caporrella}{Department of Information Engineering, Computer Science, and Mathematics\\University of L'Aquila, Italy}{matteo.caporrella@student.univaq.it}{https://orcid.org/0009-0008-4057-9711}{}
\author{Stefano Leucci}{Department of Information Engineering, Computer Science, and Mathematics\\University of L'Aquila, Italy}{stefano.leucci@univaq.it}{https://orcid.org/0000-0002-8848-7006}{}
\authorrunning{Matteo Caporrella and Stefano Leucci}
\keywords{Torus puzzle, Push number, Permutation puzzles}
\DeclareMathOperator{\row}{t-row}
\DeclareMathOperator{\col}{t-col}
\DeclareMathOperator{\lsb}{lsb}
\newcommand{\FillColumns}{\textsc{\hbox{FillColumns}}\xspace}
\newcommand{\FloatMinimums}{\textsc{\hbox{FloatMinimums}}\xspace}
\newcommand{\RadixSortBodies}{\textsc{\hbox{RadixSortBodies}}\xspace}
\newcommand{\SwapPairs}{\textsc{\hbox{SwapPairs}}\xspace}
\begin{document}

\maketitle

\begin{abstract}
    We study the \emph{Torus Puzzle}, a solitaire game in which the elements of an input $m \times n$ matrix need to be rearranged into a target configuration via a sequence of unit rotations (i.e., circular shifts) of rows and/or columns.

    Amano et al.\ proposed a more permissive variant of the above puzzle, where each row and column rotation can shift the involved elements by any amount of positions.
      The number of rotations needed to solve the original and the permissive variants of the puzzle are respectively known as the \emph{push number} and the \emph{drag number}, where the latter is always smaller than or equal to the former and admits an existential lower bound of $\Omega(mn)$.
    While this lower bound is matched by an $O(mn)$ upper bound, 
    the push number is not so well understood. Indeed, to the best of our knowledge, only an $O(mn \cdot \max\{ m, n \})$ upper bound is currently known.

    In this paper, we provide an algorithm that solves the Torus Puzzle using $O(mn \cdot \log \max \{m, n\})$ unit rotations in a model that is more restricted than that of the original puzzle.
    This implies a corresponding upper bound on the push number and reduces the gap between the known upper and lower bounds from $\Theta(\max\{m,n\})$ to $\Theta(\log \max\{m, n\})$.
\end{abstract}

\section{Introduction}
The \emph{Torus Puzzle}, also known as \emph{Sliders}, \emph{TwoBik}, \emph{RotSquare}, \emph{Loopover}, or \emph{Sixteen Puzzle}, is a solitaire game played on an $m \times n$ matrix whose entries consist of (all and only) the integers from $1$ to $mn$. These elements are initially scrambled, and the goal is that of rearranging them via a sequence of \emph{unit rotations} of rows and/or columns that causes the resulting arrangement of the elements to end up in the \emph{sorted configuration}, i.e., in increasing order when read from left to right and from top to bottom.\footnote{More formally, the element on row $i$ and column $j$ of the sorted configuration is the integer $(i-1)n + j$.}
    More precisely, a unit rotation of a row is specified by the row index and by a direction, which can be either leftward or rightward, and consists of a circular shift of all the elements in the chosen row along the chosen direction. 
Similarly, a column can be rotated either upward or downward. Figure~\ref{fig:instance_example} shows an example of such rotations. 

It turns out that not all input matrices are sortable, and it is not hard to derive a characterization of sortable matrices which also serves as a linear-time algorithmic test (see the related works section for a more detailed discussion). 
This motivated previous research to consider only sortable puzzles and to provide general upper and lower bounds on the number of unit rotations needed to solve the puzzle.
This quantity is known as the \emph{push number} and was introduced by Amano et al.\ \cite{AmanoKKKNOTY12} along with the related notion of \emph{drag number}, which counts multiple consecutive rotations of the same row or column only once. Equivalently, the drag number can be interpreted as the number of moves needed to solve the Torus Puzzle when each move allows rotating a row or column of choice by an arbitrary amount.\footnote{To be pedantic, \cite{AmanoKKKNOTY12} only defined the push and drag numbers for a specific sequence of rotations, and then introduced two functions $\text{pn}(A)$ and $\text{dn}(A)$ corresponding to the minimum push and drag numbers among all sequences of rotations that sort the elements of $A$, respectively. We find it natural to refer to $\text{pn}(A)$ and $\text{dn}(A)$ as the push and drag numbers of instance $A$, respectively.}

Amano et al.\ showed a constructive upper bound of $O(mn)$ on the drag number of any sortable $m \times n$ instance of the Torus Puzzle, along with a matching existential lower bound of $\Omega(mn)$,
i.e., that there exists a constant $\varepsilon > 0$ such that, for any choice of $m$ and $n$ with $mn > 1$, some sortable $m \times n$ instance has a drag number of at least $\varepsilon m n$.
Clearly, the push number is no smaller than the drag number, and it can never exceed it by more than a multiplicative factor of $\left\lfloor \frac{\max\{m, n\}}{2} \right\rfloor$ (since a rotation by an arbitrary amount can always be simulated using multiple unit rotations along the ``shortest'' direction).

Unfortunately, to the best of our knowledge, the best asymptotic lower and upper bounds on the push number, in terms of $m$ and $n$, are exactly those obtained by combining the $\Omega(mn)$ and $O(mn)$ bounds on the drag number given in \cite{AmanoKKKNOTY12} with the relations above. 
That is, there exists an algorithm that solves the Torus Puzzle using $O(mn \cdot \max\{m,n\})$ unit rotations, and any such algorithm needs to perform at least $\Omega(mn)$ rotations in the worst case, leaving a multiplicative gap of $\Theta(\max\{m,n\})$ between the upper and the lower bound.

\begin{figure}[t]
    \centering
    \includegraphics{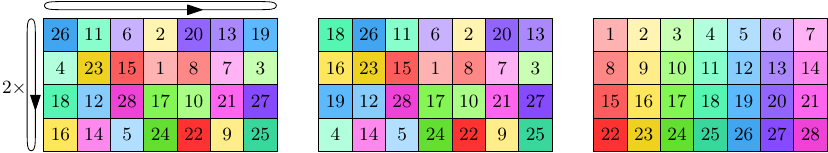}
    \caption{A sortable input instance of the Torus Puzzle (left); the configuration obtained from the input configuration by performing a unit rightward rotation of the first row, followed by two unit downward rotations of the first column (middle); and the target configuration, i.e., the one in which the matrix is sorted (right). Elements belonging to different columns in the target configuration have different hues, while elements belonging to different rows have different lightnesses.}
    \label{fig:instance_example}
\end{figure}

\subparagraph*{Our results and techniques.}
In this paper we make progress towards closing this gap by providing an algorithm solving any sortable $m \times n$ instance of the Torus Puzzle using $O(mn \cdot \log \max \{ m, n \})$ \emph{unit rotations}, which is optimal up to polylogarithmic factors in the worst case. We also show that such a sequence of rotations can be found in the same asymptotic worst-case time.
The above results hold in the even more restricted model in which the rotation directions cannot be freely chosen by the algorithm, but are independently specified, for each row and column, as part of the input instance.

While the solution strategy of \cite{AmanoKKKNOTY12} is based on repeatedly swapping triples of distinct elements, an operation that requires $\Omega(\max\{m, n\})$ unit rotations, our algorithm follows a different approach which relies on three main ingredients.
First, we treat all integers that should end up in the first row of the sorted configuration as some sort of \emph{buffer elements}, whose exact positions can be (temporarily) disregarded;
we move each remaining element to its intended final column, and the buffer elements to the first row.
Next, we sort the non-buffer elements within each column using a strategy akin to that of RadixSort. This requires handling multiple columns in parallel in order to meet the claimed upper bound on the number of moves.
Finally, we deal with the buffer elements: since they are all on the first row, the challenge is that of suitably permuting them without affecting the placement of the other elements. 
As an intermediate step, we show how to simultaneously swap a collection of pairwise-disjoint pairs of buffer elements while causing only controlled side effects to the rest of the matrix. We can then combine multiple parallel swaps in a way that ensures that their side effects cancel out, thus obtaining the sought sorted matrix.  

\subparagraph*{Some historic notes and related works.}

The earliest report we found of the Torus Puzzle dates back to 1981, when Takeshi Ogihara implemented it on a Mitsubishi MELCOM mainframe using a text-based interface. In 1993, following Ogihara’s suggestion while he was involved in upgrading Osaka University’s mainframes to NeXT computers, M. Torigoe, a Faculty of Science student, reimplemented the puzzle as a graphical NeXTSTEP application named RotSquare. Ogihara later ported RotSquare to OpenStep, Mac OS X, and iOS (see Figure~\ref{fig:screenshots_toy}~(left) for a screenshot) \cite{ogihara_personal_comm_2025}. 
Several successive implementations have been independently produced under various names. A commercial physical realization of a puzzle that appears very similar to a $4 \times 8$ Torus puzzle has been reported to exist back in 1987 on the MIT's CubeLovers mailing list (see \cite{
    smith_cubelovers,kornfeld_cubelovers,kinsman_cubelovers} and Figure~\ref{fig:screenshots_toy}~(right) for a 1995 photo). Another puzzle involving a toroidal movement mechanic is implemented by the 1992 video game Yoshi's Cookie for the Nintendo Entertainment System, where players must rotate rows and columns of a grid in order to match similar elements \emph{à la} Bejeweled \cite{juul2007swap,GualaLN14}.

The first mention of the Torus Puzzle in a mathematical context was in 1988, when Diaconis considered the $n \times n$ version as an easier to analyze variant of the Fifteen Puzzle, showing that $O(n^3)$ random rotations are enough to randomize the position of each individual element and conjecturing that $O(n^3 \cdot \log n)$ rotations suffice to shuffle all elements \cite[p.~90]{diaconis1988group}. Diaconis' conjecture is still open to this day, and the best-known upper bound for the \emph{Torus Shuffle} is of $O(n^3 \cdot \log^3 n)$ random rotations \cite{blumberg2024mixing}.

In 1997, mathematician Alexander Bogomolny \cite{Bogomolny1997} described the puzzle on his website under the name \emph{Sliders Puzzle}. He proved that every $4 \times 4$ matrix is sortable by providing an explicit strategy that swaps two elements of choice without affecting the positions of the other elements. In general, each arrangement of the elements of an $m \times n$ matrix can be seen as a permutation of $\{1, 2, \dots, mn\}$, with the sorted configuration corresponding to the identity permutation. Bogomolny observed that, in any $n \times n$ puzzle with odd $n$, all rotations correspond to even permutations, hence all configurations reachable from the initial one must have its same parity. Moreover, all such parity-preserving configurations can actually be reached when $n$ is odd. For even values of $n$, the elements can always be rearranged into any configuration of choice.
Bogomolny's impossibility argument immediately extends to rectangular matrices: an $m \times n$ matrix in which both $m$ and $n$ are odd is sortable only if the permutation induced by the initial configuration of the elements is even.

In 1999, Donald Lee Greenwell gave explicit sequences swapping any two adjacent elements in $6 \times 6$ and $8 \times 8$ matrices
\cite{Greenwell1999}, and argued that similar swaps can be performed in any $n \times m$ matrix, as long as at least one of $n$ and $m$ is even. Greenwell also proposed two additional variants of the puzzle, where rows and columns wrap around as if they were laid out on a Klein bottle or on the projective plane.

Later, Amano et al.\ \cite{AmanoKKKNOTY12} considered general $m \times n$ matrices and described a sequence of rotations performing a three-element swap: given three distinct elements $x_0,x_1,x_2$, it is possible to move each $x_i$ to the position of $x_{(i+1) \bmod 3}$ without affecting the placement of the other elements.
Then, the authors show how a sequence of $O(mn)$ three-element swaps can sort any $m \times n$ matrix where at least one of $m$, $n$, and the initial permutation of the elements is even. 
Since each three-element swap can be implemented by a sequence consisting of a constant number of \emph{compound rotations}\footnote{We use \emph{compound rotation} to refer to multiple consecutive unit rotations of the same row or column.} of rows and columns,
this implies an upper bound of $O(mn)$  on the drag number. However, the overall number of unit rotations required by a three-element swap can be as large as $\Theta(\max\{m, n\})$, yielding the best-known upper bound of $O(mn \cdot \max\{m,n\})$ on the push number.
As observed by \cite{AmanoKKKNOTY12}, a lower bound on the drag number (and on the push number) can be obtained by noticing that there are at least $\frac{(mn)!}{2}$ distinct sortable $m \times n$ matrices 
and at most $(2mn)^{\ell + 1}$ distinct sequences of up to $\ell$ (possibly compound) rotations, implying that some sortable $m \times n$ matrix requires at least $\log_{2mn}\frac{(mn)!}{2} - 1 = \Omega(mn)$ rotations to be sorted.

\noindent To summarize, the results in \cite{Bogomolny1997,Greenwell1999,AmanoKKKNOTY12} characterize the sortable instances of the Torus Puzzle: 
\begin{remark}
    \label{remark:sortable_characterization}
    An $m \times n$ instance $A$ of the Torus Puzzle can be sorted through a sequence of row and column rotations if and only if at least one of $n$, $m$, and the permutation of $\{1, 2, \dots, nm\}$ induced by the arrangement of elements in $A$ is even.
\end{remark}

\begin{figure}
    \centering
    \includegraphics[width=2.90cm]{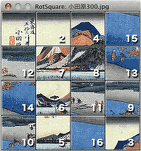}
    \hspace{2cm}
    \includegraphics{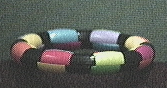}
    \caption{Left: A screenshot of the RotSquare application for Mac OS X by Takeshi Ogihara. Right: a physical puzzle in which eight colored sections of a Torus can rotate in 90-degree increments along the shorter dimension, while the two halves of the torus delimited by a horizontal plane passing through the center can rotate along the long dimension. The puzzle resembles a $4 \times 8$ instance of the Torus Puzzle in which some elements are indistinguishable and two pairs of adjacent rows rotate together. The photo was posted in 1995 to MIT's CubeLovers mailing list \cite{kinsman_cubelovers} and is reproduced here with the author's permission.}
    \label{fig:screenshots_toy}
\end{figure}

Several heuristic strategies for finding solutions to Torus Puzzle instances have been proposed in \cite{HemphillS12}, and a \emph{colored} variant of the Torus Puzzle has been shown to be $\mathsf{NP}$-Hard in \cite{AmanoKKKNOTY12}. In this variant, each entry of the matrix consists of one of two colors, and the goal is that of determining whether some target arrangement of colors can be reached. 

The Torus Puzzle falls into the general category of permutation puzzles \cite{EgnerP98, mulholland2016permutation}: the input configuration is an element of a suitable permutation group, and the goal is that of deciding whether it belongs to the same connected component of the desired target configuration.
If the answer is affirmative, then one can also look for a sequence of moves transforming the initial configuration into the target one, or even for the shortest such sequence. These two problems correspond to the generalized word problem and to the shortest word problem on the associated permutation group.
Some other notable puzzles in this category are the Fifteen Puzzle \cite{archer1999modern,RatnerW86,DemaineR18,Goldreich11}, the Rubik's Cube \cite{DemaineDELW11,DemaineER18} including a variant played on a torus \cite{AlmGR13}, several other \emph{swapping token} problems (see, e.g., \cite{KornhauserMS84,BiloFG024,HikenW25,wilson1974graph} and the references therein), and \emph{rotating puzzles on graphs} \cite{Yang11,MyintUV22}. In particular, the latter appear to be close in spirit to the Torus Puzzle: given a graph $G$ with an element on each vertex, the goal is that of rearranging these elements into a target configuration via a sequence of rotations along the cycles of $G$. An instance of the Torus Puzzle can be seen as a rotating puzzle on an $m \times n$ torus graph, with the further restriction that only some of the cycles can be rotated. 

Finally, we mention that the problem of finding lower and upper bounds to the drag and/or push numbers of Torus Puzzle can be cast as the problem of finding corresponding bounds on the diameter of the Cayley graph of the associated permutation group.

\subparagraph*{Organization of the paper.} 

Section~\ref{sec:preliminaries} introduces the notation used throughout the paper and recalls some standard definitions.
Section~\ref{sec:algorithm} presents our sorting algorithm, which solves any sortable instance of the Torus puzzle using $O(mn \cdot \log \max \{m,n\})$ unit rotations. 
For the sake of simplicity, we assume that the allowed (unit) row rotations are all in the rightward direction, while columns can only be rotated downward.
Section~\ref{sec:conclusions} discusses how the algorithm can be implemented to run in time $O(mn \cdot \log \max\{m, n\})$, how it can be adapted to handle the case in which each row and column can only be rotated in a direction specified as part of the input, and the problems left open.

A playable version of the puzzle, along with an implementation of our algorithm, is available at \url{https://www.isnphard.com/g/torus-puzzle/}, which might help the reader visualize the different subroutines discussed in the technical part of the paper.

\section{Preliminaries}
\label{sec:preliminaries}

\subparagraph*{Matrices and rotations.} We use $A$ to refer to the $m \times n$ matrix to be sorted, and we assume that $m,n \ge 2$.\footnote{Given an $m \times 1$ or a $1 \times n$ matrix $A$, it is easy to design a linear-time algorithm deciding whether $A$ can be sorted and, if that is the case, returning a sequence of $O(nm)$ moves that sorts $A$.} For an integer $x \in \{1, \dots, mn\}$, we denote by $\row(x)$ and $\col(x)$ the \emph{target row} and the \emph{target column} in which $x$ should be placed in the sorted configuration, respectively, i.e., we let
$\row(x) = \lceil x / n \rceil$ and $\col(x) = ((x-1) \bmod {n}) + 1$, where rows and columns are numbered from $1$.

The first row of $A$ will be denoted by $R$, while $R[j]$ will denote either the position on the $j$-th column of $R$ or the element contained therein, depending on context.
Similarly, the $j$-th column of $A$ will be denoted by $C_j$, and the element or position on the $i$-th row of $C_j$ will be denoted by $C_j[i]$.
We refer to $C_j[1]$ and $C_j[m]$ as the \emph{head} and the \emph{tail} of $C_j$, respectively. The \emph{body of $C_j$} refers to all positions/elements that are in column $C_j$ but not in $C_j$'s head (notice that the tail of a column is part of its body). Observe that $R$ contains all and only the columns' heads and, more precisely, $C_j[1]$ coincides with $R[j]$.

Since the model considered in the next section only allows for unit rightward rotations of rows, and unit downward rotations of columns, we refer to a row/column notation by the name of the corresponding row/column, and we may omit the terms ``unit'', ``rightward'', or ``downward''. A non-negative integer superscript $k$ denotes $k$ consecutive rotations of the same row/column, and we may write a sequence of rotations involving multiple rows/columns by juxtaposing row and column names with their respective superscripts. Sequences are evaluated from left to right. For example, $C_1^3 R^2 C_2^0 C_n$ is equivalent to $C_1 C_1 C_1 R R C_n$ and denotes three consecutive downward rotations of $C_1$, followed by two rightward rotations of $R$, followed by a downward rotation of $C_n$.

If $p$ and $q$ are two positions in the matrix, and $S$ is a sequence of rotations, we write $p \xrightarrow{S} q$ to denote that, as a result of applying the sequence $S$, the element initially placed in $p$ ends up in position $q$.
If $p_1, p_2, \dots, p_{k+1}$ are positions and $S_1, S_2, \dots, S_k$ are sequences of rotations such that $p_i \xrightarrow{S_i} p_{i+1}$, we might express this compactly using the shorthand $p_1 \xrightarrow{S_1} p_2 \xrightarrow{S_2} p_3 \xrightarrow{S_3} \dots \xrightarrow{S_k} p_{k+1}$. Observe that the above implies $p_1 \xrightarrow{S_1 S_2 \dots S_k} p_{k+1}$, where $S_1 S_2 \dots S_k$ denotes the concatenation of $S_1, S_2, \dots, S_k$, in this order.

\subparagraph*{Permutations.} Since each rotation can also be interpreted as permutation of the elements in $A$, we employ the \emph{left-to-right} convention for permutation products. 
In this way, the configuration of elements resulting from a sequence of rotations corresponds to the action on $A$ of the permutation resulting from their product.
Finally, we briefly recall some standard concepts used in the paper: a permutation $\pi$ can be decomposed into a product of some number $k$ of disjoint \emph{cycles}\footnote{A cycle $(x_1 \; x_2 \; \dots \; x_\ell )$ of length $\ell \ge 1$ is a permutation $\gamma$ such that: (i) all elements $x_1, \dots, x_\ell$ are distinct, (ii) $\gamma(x_i) = x_{i+1}$ for $i = 1, \dots, \ell-1$, (iii) $\gamma(x_\ell) = x_1$, and (iv) $\gamma(x) = x$ for all $x \not\in \{x_1, \dots, x_\ell\}$.} $C_1, C_2, \dots, C_k$, i.e., $\pi = C_1 C_2 \dots C_k$. A \emph{transposition} is a permutation consisting of a single cycle of length $2$, and an \emph{involution} is a permutation in which each cycle has length at most $2$, i.e., each cycle is either a \emph{fixed-point} (a cycle of length $1$) or a transposition.
The number of cycles of length $2$ of a permutation $\pi$ is denoted by $a_2(\pi)$.
Each permutation $\pi$ can be written as either a product of an even number of transpositions, or as a product of an odd number of transpositions, but not both, and the \emph{parity} of $\pi$ is defined as the parity of the number of such transpositions. If a permutation $\pi$ of $n$ elements has $k$ cycles (including its fixed-points), then the parity of $\pi$ is equal to the parity of $n - k$. Moreover, the parity of a permutation $\pi$ is the same as that of its inverse $\pi^{-1}$.
 The product $\pi \sigma$ of two permutations $\pi$ and $\sigma$ is even if and only if the parities of $\pi$ and $\sigma$ are the same (i.e., $\pi$ and $\sigma$ are both even or both odd).

\section{Solving the Torus Puzzle using \texorpdfstring{$O(mn \cdot \log \max\{m,n\})$}{O(mn log max{m,n})} unit rotations}
\label{sec:algorithm}

In this section we provide an algorithm solving any sortable $m \times n$ instance of the Torus Puzzle using only $O(mn \cdot \log \max\{m, n\})$  rotations. We consider the case in which all rows can only rotate rightward while columns can only rotate downward. In Section~\ref{sec:conclusions} we discuss how arbitrary rotation directions can be handled.

We start with some definitions. We say that a column $C_j$ is \emph{near-full} if all elements $x$ in the body of $C_j$ satisfy $\col(x) = j$, i.e., they are in their target column. If $C_j$ is not near-full, we say that it is \emph{underfull}.
A column $C_j$ is \emph{body-full} if it is near-full and its body contains all (and only) the elements $x$ with $\col(x) = j$ and $\row(x)>1$.
In other words, the body of $C_j$ contains exactly the same set of elements as that in the sorted configuration, but these elements can appear in an arbitrary order.
Finally, a column is \emph{body-sorted} if all elements in its body are in their respective target positions.

The algorithm first makes all columns near-full, then body-full, then sorts their bodies in groups so that all columns become body-sorted, and finally handles sorting the first row. Each of the following subsections describes one of the above steps, in the above order.

\subsection{Making columns near-full}

We start by describing a procedure, which we name \FillColumns, that receives as input an $m \times n$ matrix $A$ with $m \le n$ 
and rearranges the elements in $A$ via a sequence of $O(mn \cdot \log n)$ rotations to ensure that all columns of the resulting configuration are near-full. Intuitively, the algorithm uses the first row as a sort of ``sushi belt'' from which each column can ``pick up'' a desirable element while simultaneously ``discarding'' an undesirable one. 

\begin{algorithm}[t]
    \SetKwInput{Input}{Input}
    \SetKwInput{Output}{Output (in place)}
    \Input{An $m \times n$ matrix $A$ with $m \le n$.}
    \Output{A matrix in which all columns are near-full.}
    \BlankLine
    \caption{Procedure \FillColumns.}
    \label{alg:FillColumns}
    \While{there exists an underfull column}
    {
        \For{$j = 1, 2, \dots, n$}
        {
            \While{$\col(R[j]) = j$ and $C_j$ is underfull}{
                Rotate $C_j$\;
            }
        }
        Rotate $R$\;
    }
\end{algorithm}

In details, $R$ is repeatedly rotated and, whenever (i) some element $x$ in $R$ ends up in its target column $C_j$ (i.e., $\col(x) = j$), and (ii) such a column is underfull, then $C_j$ is also rotated. 
The rotation  of $C_j$ causes $x$ to move from the head to the body of $C_j$, and the tail of $C_j$ to move to the head of $C_j$ (in place of $x$), i.e., $C_j[1] \xrightarrow{C_j} C_j[2]$ and $C_j[m] \xrightarrow{C_j} C_j[1]$.
These column rotations are repeated until either $C_j$ becomes near-full, or the target column of the new head of $C_j$ is no longer $C_j$ (possibly both). 
When no element $x$ in $R$ satisfies conditions (i) and (ii), the procedure resumes rotating the first row. The procedure stops as soon as all columns are near-full. The pseudocode of \FillColumns is shown in Algorithm~\ref{alg:FillColumns}, while Figure~\ref{fig:fillcolumns} shows a possible input, an intermediate configuration, and the resulting output.

\noindent We start our analysis by bounding the number of column rotations.

\begin{lemma}
    \label{lemma:fillcolumns_column_rotations}
    The number of column rotations performed by \FillColumns is $O(mn)$. 
\end{lemma}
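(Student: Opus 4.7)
The plan is to bound the number of rotations of each individual column $C_j$ by $m-1$ and then sum over the $n$ columns, which immediately gives the claimed $O(mn)$ total.

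The key observation I would formalize is that the body of $C_j$, i.e., the $m-1$ positions $C_j[2],\dots,C_j[m]$, behaves exactly like a FIFO queue of capacity $m-1$. Every column rotation of $C_j$ inserts the current head at position $C_j[2]$, shifts the remaining body entries one step toward the tail, and ejects the element at position $C_j[m]$ to the head $C_j[1]$. By the precondition of the inner while loop, the inserted element always satisfies $\col = j$, and, crucially, row rotations do not affect any body position whatsoever: they merely permute the column heads contained in $R$. Hence the contents of $C_j$'s body evolve \emph{solely} through rotations of $C_j$ itself.

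Once this FIFO picture is in place, the conclusion is essentially immediate: after at most $m-1$ rotations of $C_j$, each of the $m-1$ original body entries has been ejected through the tail and replaced by a freshly inserted element with $\col = j$, so $e_j$ reaches $m-1$ and $C_j$ is near-full. Since the inner while loop requires $C_j$ to be underfull, no further rotation of $C_j$ will ever be performed. The only delicate point, which I regard as the most subtle part of the write-up rather than a real obstacle, is that the FIFO accounting must be carried out \emph{globally} across the entire execution: rotations of $C_j$ may be spread over several outer iterations and interleaved with rotations of other columns and of $R$, but, as noted, neither of these can touch $C_j$'s body, and every insertion into it is guaranteed to be correct for column $j$. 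Summing the per-column bound $r_j \le m-1$ over the $n$ columns then yields $O(mn)$ column rotations, as claimed.
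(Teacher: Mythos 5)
Your proof is correct and takes essentially the same approach as the paper: both arguments observe that each rotation of $C_j$ pushes a fresh element with target column $j$ into $C_j[2]$, that such an element needs $m-1$ rotations of $C_j$ to leave the body (your FIFO picture), and hence that after $m-1$ rotations of $C_j$ the column is near-full and will never be rotated again. The only cosmetic issue is that you use the symbol $e_j$ without having introduced it.
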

\begin{proof}
    We prove the claim by focusing on a generic column $C_j$ and showing that, once $C_j$ is rotated $m-1$ times, \FillColumns performs no additional rotations on $C_j$.
    
   Indeed, \FillColumns ensures that each rotation of $C_j$ moves some element $x$ having target column $C_j$ into the body of $C_j$ and, specifically, into $C_j[2]$. Since $m-1$ additional downward rotations of $C_j$ are needed for $x$ to leave the body of $C_j$, we have that after performing any number $h \in \{0, \dots, m-1\}$ rotations of $C_j$, the body of $C_j$ contains at least $h$ elements whose target column is $C_j$.
   Then, if the $(m-1)$-th rotation of $C_j$ is performed, it must result in a column that is near-full. The claim follows by observing that \FillColumns only rotates columns that are not near-full.
\end{proof}

Let $A_0 = A$ be the input matrix to $\FillColumns$, and let $A_h$ be the matrix as arranged immediately after the completion of the $h$-th iteration of the outer while loop of Algorithm~\ref{alg:FillColumns}, if it exists.
We associate a \emph{potential} $\Psi(A_h)$ to each $A_h$, where $\Psi(A_h)$ is defined as the number of elements $x$ in $A_h$ that are in the body of some column other than their target column, i.e., $\Psi(A_h) = | \{ (i,j) \in \{2, \dots, m\} \times \{1, \dots, n \} : \col(C_j[i]) \neq j \} |$. 
The following lemma shows that each group of $n$ iterations of the outer while loop starting from some configuration $A_h$ decreases the potential by at least half the number of underfull columns in $A_h$.

\begin{lemma}
   \label{lemma:potential_decrease}
    Consider a configuration $A_h$ with $f$ underfull columns. If $A_{h+n}$ exists, then $\Psi(A_{h+n}) \le \Psi(A_h) - \frac{f}{2}$.
\end{lemma}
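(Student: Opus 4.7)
My plan is to lower-bound $\Delta\Psi := \Psi(A_h) - \Psi(A_{h+n})$ by $f/2$ by first showing that each execution of the inner while loop that performs at least one rotation (call it a \emph{chain}) decreases $\Psi$ by exactly $1$, and then counting the number of chains that must occur over the $n$ outer iterations.

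For the chain-level accounting, I analyze a single chain on a column $C_j$. Each rotation within the chain promotes the old tail of $C_j$ to $R[j]$ and pushes a good (target-$j$) element from $R[j]$ into the body, so the chain continues while both $\col(R[j]) = j$ and $C_j$ remains underfull. A short case analysis shows that the chain cannot terminate purely because $C_j$ becomes near-full without also bringing up a bad element: going from underfull to near-full requires every originally-bad body element to be lifted out through the tail, and the very first such lift already sets $\col(R[j]) \neq j$, breaking the loop condition. Hence each chain removes exactly one bad element from $C_j$'s body, and $\Delta\Psi$ equals the total number of chains in the window.

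Next I exploit that $R$ is rotated exactly $n$ times over the window and that rotations of any column $C_k$ leave $R[j]$ untouched for $j \neq k$. Therefore every element that remains in $R$ throughout visits every position of $R$ exactly once. Let $k_j$ denote the number of elements in $R$ at $A_h$ whose target column is $j$, so $\sum_{j=1}^n k_j = n$. Every near-full column $C_j$ has $m-1$ of its $m$ target elements in its body and thus $k_j \leq 1$; hence the $k_j$'s for underfull columns sum to at least $n - (n - f) = f$. Moreover, for each underfull $C_j$ with $k_j \geq 1$, at least one chain for $C_j$ occurs within the window: either a target-$j$ element in $R$ reaches position $R[j]$ while $C_j$ is still underfull (triggering an immediate chain), or $C_j$ has become near-full beforehand through earlier chains of $C_j$. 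Hence $\Delta\Psi$ is already at least the number of underfull columns with $k_j \geq 1$.

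The main obstacle is dealing with underfull columns $C_j$ having $k_j = 0$: such columns have no target-$j$ element in $R$ at $A_h$, so a chain for $C_j$ can only occur after some other column's chain injects a target-$j$ element into $R$ as its bad tail, and that element must subsequently rotate into $R[j]$ before the window closes and while $C_j$ is still underfull. A double-counting identity relating $\Psi(A_h)$ to the deficits $m - k_j - g_j$, where $g_j$ is the number of target-$j$ elements in the body of $C_j$, shows that for each such starved column at least two target-$j$ elements are trapped in other columns' bodies and thus available to be injected. The factor $1/2$ in the statement should emerge from a careful amortization that balances the ``direct'' chains (triggered by elements initially in $R$) against the ``indirect'' chains (triggered by injected bad tails); carefully bookkeeping which injected elements can actually be cashed in before the window closes is where I expect the technical difficulty of the proof to lie.
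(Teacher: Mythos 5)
Your chain-level accounting is sound and matches the paper's implicit use of it: each execution of the inner while loop that performs at least one rotation moves exactly one element with wrong target column out of the body, so $\Delta\Psi$ equals the number of such chains. Your counting $\sum_{j \text{ underfull}} k_j \ge f$ is also exactly the paper's first step.

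The gap is in how you lower-bound the number of chains from the $k_j$'s. You derive \emph{one} chain for each underfull column with $k_j \ge 1$ and conclude $\Delta\Psi \ge |\{j : C_j \text{ underfull},\, k_j \ge 1\}|$. That quantity need not be $\ge f/2$: the $k_j$'s for underfull columns only sum to at least $f$, and they can be concentrated. For instance, one underfull column could carry $k_j = f$ while the other $f-1$ underfull columns have $k_j = 0$; then your bound yields only $1$ chain, not $f/2$. So the bottleneck is not the starved columns with $k_j = 0$ (those contribute nothing to $\sum k_j$ and the paper simply ignores them); it is that a single column with large $k_j$ must be credited with proportionally many chains. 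The paper's per-column argument does exactly that: writing $X_j$ for the target-$j$ elements that are in $R$ at $A_h$, it orders them $x_1, \dots, x_k$ by the time $\ell(x_i)$ at which each reaches position $R[j]$, and then shows that for every $i < k$ the column $C_j$ must still be underfull at time $\ell(x_i)$. Otherwise $C_j$ would already be near-full (and would remain so), its body would contain $m-1$ target-$j$ elements, and the still-uninserted $x_i, \dots, x_k$ would give at least $m+1 > m$ elements with target column $j$, a contradiction. Combined with your observation for the single-element case $k_j = 1$, this yields $\max\{1, k_j - 1\} \ge k_j/2$ chains per column, and hence $\Delta\Psi \ge \frac{1}{2}\sum_j k_j \ge f/2$. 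Your proposed ``direct versus indirect chain'' amortization and the double-counting identity for $k_j = 0$ columns are not what is needed, and in any case are left unproved.
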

\begin{proof}
    We assume that $A_{h+n}$ exists since otherwise the claim is trivially true.
    For an underfull column $C_j$ of $A_h$, we let $X_j$ be the set of elements $x$ in $R$ that have $C_j$ as their target column (i.e., such that  $\col(x) = j$), while for each near-full column $C_j$ we let $X_j = \emptyset$.    

    Since row $R$ of configuration $A_h$ contains at most one element with target column $C_j$ for each of the $n-f$ near-full columns $C_j$, there must be at least $f$ distinct elements in $R$ whose target column is underfull. Hence, $\sum_{j=1}^n |X_j| \ge f$.

    In the rest of the proof we argue that at least $\frac{1}{2} \sum_{j=1}^n |X_j|$ elements from $\cup_{j=1}^n X_j$ are placed in the bodies of their respective columns in $A_{n + h}$.
    Since the only way for \FillColumns to move an element $x$ from the head to the body of its target column $C_j$ is that of rotating $C_j$ until some other element $y \neq x$ with $\col(y) \neq j$ ends up in $R$, causing $\Psi(\cdot)$ to decrease by $1$, this will imply that $\Psi( A_{h+n} ) \le \Psi(A_h) - \frac{1}{2} \sum_{j=1}^n |X_j| \le \Psi(A_h) - \frac{f}{2}$, as claimed.

    In particular, we focus on a specific column $C_j$ with $|X_j| \ge 1$, and we show that at least $\max\{ 1, |X_j| - 1 \} \ge \frac{|X_j|}{2}$ of the elements in $X_j$ are in the body of $C_j$ in configuration $A_{h + n}$. 
    Since row $R$ is rotated $n$ times from configuration $A_h$ to configuration $A_{h+n}$, and $C_j$ is only rotated if $\col(C_j[1]) = j$, we have that each element $x \in X_j$ becomes the head of $C_j$ in some (distinct) configuration $A_{\ell(x)}$ with $h \le \ell(x) < h+n$.
    Let $x_1, \dots, x_k$ be the elements in $X_j$, in increasing order of $\ell(x_i)$. 
    Consider then the $\ell(x_i)$-th iteration of the outer while loop of \FillColumns. If $i=1$ or $i < k$, then $C_j$ is underfull in $A_{\ell(x_i)}$, which means that $x$ is moved to the body of $C_j$ in such iteration, and never leaves it in the following iterations. It follows that at least $\max\{ 1, k-1\}$ elements are in the body of $C_j$ in configuration $A_{h+n}$.
\end{proof}

\begin{figure}[t]
    \centering
    \includegraphics{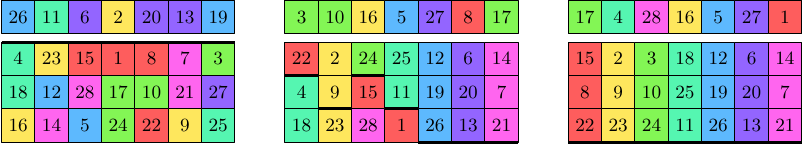}
    \caption{An input instance $A = A_0$ of the Torus Puzzle (left), and the matrices $A_9$ and $A_{17}$ corresponding the arrangement of the elements in $A$ at end of the ninth and of the last iteration of the outer while loop of \FillColumns (middle and right, respectively). Elements are colored according to their target columns. A bold segment on each column $C_j$ indicates the number of rotations of $C_j$ performed by \FillColumns until the shown configuration. A small gap has been added under row $R$ for improved visual clarity. Notice how all columns in $A_{17}$ are near-full.}
    \label{fig:fillcolumns}
\end{figure}

\noindent The above lemma allows us to upper bound the number of rotations of the first row.

\begin{lemma}
    \label{lemma:fillcolumns_row_rotations}
    The number of rotations of $R$ performed by \FillColumns is $O(mn \cdot \log n)$.
\end{lemma}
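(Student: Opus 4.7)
The plan is to bound the number of outer while loop iterations of \FillColumns, since each such iteration executes exactly one rotation of $R$ at its end. I would do this by combining Lemma~\ref{lemma:potential_decrease} with a matching upper bound on $\Psi(A_h)$ in terms of the number of underfull columns, yielding a geometric decrease of the potential.

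First, I would observe the trivial initial estimate $\Psi(A_0) \le n(m-1)$, since $\Psi$ counts elements sitting in the $n(m-1)$ body cells of the matrix. I would also note that the procedure terminates exactly when $\Psi$ hits $0$, because $\Psi(A_h) = 0$ is equivalent to all columns being near-full (the only stopping condition of the outer while loop). Therefore it suffices to control how quickly $\Psi$ reaches $0$.

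The key additional ingredient is an upper bound on the potential in terms of the count $f$ of underfull columns in $A_h$. Since every misplaced body element counted by $\Psi$ must reside in the body of a column that is (by definition) underfull, and each column has only $m-1$ body cells, I get $\Psi(A_h) \le (m-1) f$, i.e., $f \ge \Psi(A_h)/(m-1)$. Plugging this into Lemma~\ref{lemma:potential_decrease} gives
\[
\Psi(A_{h+n}) \;\le\; \Psi(A_h) - \frac{f}{2} \;\le\; \Psi(A_h)\left( 1 - \frac{1}{2(m-1)} \right),
\]
whenever $A_{h+n}$ exists. Iterating this inequality over $k$ blocks of $n$ consecutive outer iterations, one obtains $\Psi(A_{kn}) \le n(m-1)\bigl(1 - \tfrac{1}{2(m-1)}\bigr)^{k}$, which drops below $1$ (and hence equals $0$, by integrality) for some $k = O(m \log (nm))$. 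Since $m \le n$, this is $k = O(m \log n)$, so the total number of outer iterations — and thus of $R$ rotations — is at most $kn = O(mn \log n)$.

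I do not anticipate a genuine obstacle here: Lemma~\ref{lemma:potential_decrease} already does the heavy lifting, and the only new observation needed is the elementary inequality $\Psi(A_h) \le (m-1)f$, which turns the additive decrease of the previous lemma into a multiplicative contraction. The sole subtlety to double-check is the edge case when $A_{h+n}$ does not exist, but there the algorithm has already terminated within the block, so no extra bookkeeping is required.
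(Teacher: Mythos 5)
Your proof is correct and rests on the same essential ingredients as the paper's: Lemma~\ref{lemma:potential_decrease} together with the elementary bound $\Psi(A_h) \le (m-1)f$, which the paper uses implicitly when it caps the potential at the start of each of its $O(\log n)$ phases. The paper organizes the argument as a phase decomposition (group configurations by which dyadic range the underfull-column count falls into, then charge $O(mn)$ row rotations per phase), while you convert the same two facts into a multiplicative contraction $\Psi(A_{h+n}) \le \bigl(1-\tfrac{1}{2(m-1)}\bigr)\Psi(A_h)$ and solve directly for the number of $n$-blocks; the two formulations are equivalent, and yours is arguably the tidier packaging.
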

\begin{proof}
    We partition the rotations of $R$ performed by \FillColumns into $1 + \lfloor \log n \rfloor$ phases.
    We say that a configuration $A_h$ belongs to the $i$-th phase if its number of underfull columns lies in the half-open interval $[\frac{n}{2^{i-1}}, \frac{n}{2^{i}} )$. We also say that a rotation of $R$
belongs to phase $i$ if the configuration $A_h$ from which it is performed belongs to phase $i$. 

    By our definition of $\Psi(\cdot)$, we have that if $A_h$ is in phase $i$, then $\Psi(A_h) \le \frac{n}{2^{i-1}}(m-1)$. Moreover, Lemma~\ref{lemma:potential_decrease} ensures that, if $A_{h+n}$ exists, then $\Psi(A_{h+n}) \le \Psi(A_h) - \frac{n}{2^{i+1}}$.
    This shows that there can be at most $\frac{n}{2^{i-1}} (m-1) \cdot \frac{2^{i+1}}{n} = 4(m-1)$ disjoint groups of $n$ rotations of $R$ in phase $i$. Hence, the number of rotations of $R$ in each phase is at most $(n-1) + 4n(m-1) < 4mn$. The claim follows by multiplying $4mn$ by the number of phases.
\end{proof}

The stopping condition of $\FillColumns$, together with Lemma~\ref{lemma:fillcolumns_column_rotations} and Lemma~$\ref{lemma:fillcolumns_row_rotations}$, imply the following corollary, which summarizes the result of this section.
\begin{corollary}
    \label{cor:near_full}
    Given an $m \times n$ input matrix $A$ with $m \le n$, \FillColumns performs $O(mn\cdot \log n)$ rotations and rearranges the elements in $A$ so that all columns become near-full.
\end{corollary}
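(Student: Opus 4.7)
The plan is to observe that Corollary~\ref{cor:near_full} is really just a bookkeeping step that assembles the two previously-established lemmas and checks that the termination condition of \FillColumns gives the desired structural guarantee on the output.

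First I would argue correctness, which is almost immediate. The outer \texttt{while} loop of Algorithm~\ref{alg:FillColumns} exits only when no underfull column exists in the current configuration. Therefore, provided the algorithm terminates at all, the matrix it returns satisfies the property that every column is near-full, which is exactly the guarantee required by the corollary's output specification.

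Next I would handle the rotation count and simultaneously certify termination. Each iteration of the outer \texttt{while} loop finishes with exactly one rotation of $R$, so an upper bound on the number of row rotations is also an upper bound on the number of outer iterations and hence on the total running time. Lemma~\ref{lemma:fillcolumns_row_rotations} gives $O(mn \log n)$ row rotations, which immediately yields both termination and the $O(mn \log n)$ bound on the iterations of the outer loop. Within these iterations the total number of column rotations is bounded by $O(mn)$ thanks to Lemma~\ref{lemma:fillcolumns_column_rotations}. Summing the two contributions gives $O(mn \log n) + O(mn) = O(mn \log n)$ unit rotations in total, as required.

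The hard part of this section has already been absorbed into the preceding lemmas: Lemma~\ref{lemma:fillcolumns_column_rotations} (via the observation that each column rotation permanently commits one ``correct'' element to the body) and Lemma~\ref{lemma:potential_decrease} together with its corollary Lemma~\ref{lemma:fillcolumns_row_rotations} (via the phase argument that tracks how quickly the number of underfull columns shrinks). Consequently, the only minor subtlety I would flag in the write-up of the corollary is the implicit appeal to termination when invoking the stopping condition; this is why I would present the count of row rotations \emph{before} the correctness conclusion, so that termination is established without circularity before we use the exit condition of the outer loop to conclude that every column is near-full.
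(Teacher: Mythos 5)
Your proposal is correct and follows essentially the same route as the paper, which likewise derives the corollary directly from Lemma~\ref{lemma:fillcolumns_column_rotations}, Lemma~\ref{lemma:fillcolumns_row_rotations}, and the stopping condition of \FillColumns. The explicit flag that termination should be certified (via the row-rotation bound, whose proof relies on the potential argument of Lemma~\ref{lemma:potential_decrease}) before appealing to the exit condition is a sound way to present it, even though the paper leaves this implicit.
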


\subsection{Transforming near-full columns into body-full columns}

Once all the columns of a configuration $A$ are near-full, it is easy to see that $O(mn)$ rotations suffice to rearrange the elements in $A$ so that all columns become body-full. This is exactly the purpose of our next procedure, which we name \FloatMinimums. The corresponding pseudocode is given in Algorithm~\ref{alg:FloatMinimums}, while Figure~\ref{fig:floatminimums} shows an example input configuration along with the resulting output.

\begin{algorithm}[t]
    \SetKwInput{Input}{Input}
    \SetKwInput{Output}{Output (in place)}
    \Input{An $m \times n$ matrix $A$ in which all columns are near-full.}
    \Output{A matrix in which all columns are body-full.}
    \BlankLine
    \caption{Procedure \FloatMinimums.}
    \label{alg:FloatMinimums}
    \For{$n$ times}{
        \For{$j=1,\dots,n$}{
            \If{$\col(R[j]) = j$}{
                \While{$\row(R[j]) \neq 1$}{
                    Rotate $C_j$;
                }
            }
        }
        Rotate $R$;
    }
\end{algorithm}

We start by noticing that the fact that all columns of $A$ are near-full implies that $R$ contains exactly one element $x_j$ with $\col(x_j) = j$ per column $C_j$.
\FloatMinimums uses this fact by performing $n$ rotations of $R$: before each rotation, it checks whether any element $x_j$ in $R$ is on its target column $C_j$. Whenever this is the case, column $C_j$ contains all (and only) the elements having $j$ as their target column (in some arbitrary order), and \FloatMinimums rotates $C_j$ until its head becomes the unique element $x$ with $\row(x)=1$ and $\col(x) = j$, thus making $C_j$ body-full with at most $m-1$ rotations of $C_j$.
Column $C_j$ will never be rotated again by the procedure. Hence, we have the following:

\begin{lemma}
    \label{lemma:body_full}
    Given an $m \times n$ input matrix $A$ in which all columns are near-full, procedure \FloatMinimums performs $O(mn)$ rotations and rearranges the elements in $A$ so that all columns become body-full.
\end{lemma}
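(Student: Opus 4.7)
The plan is to control \FloatMinimums via an invariant on the first row $R$. Since every column $C_j$ is near-full, each of the $m-1$ positions in the body of $C_j$ holds an element with target column $j$; as there are only $m$ such elements in total, exactly one of them lies outside the body. This remaining element cannot sit in the body of any other near-full column, so it must lie in $R$. Summing over $j$, the row $R$ contains exactly one element per target column, and hence the map $\pi\colon \{1,\dots,n\} \to \{1,\dots,n\}$ defined by $\pi(j) = \col(R[j])$ is a permutation. I will track $\pi$ across the $n$ outer iterations.

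My first step is to establish two local properties of one outer iteration. First, the inner \textbf{for}-loop preserves $\pi$ and leaves every column near-full: whenever the condition $\col(R[j]) = j$ triggers, the column $C_j$ already contains all $m$ elements with target column $j$, so each rotation performed inside the \textbf{while}-loop merely swaps a target-column-$j$ element at the head with another target-column-$j$ element from the body. Thus $\pi(j)$ stays equal to $j$, every other entry of $\pi$ is untouched, and $C_j$ remains near-full. The \textbf{while}-loop exits exactly when the head of $C_j$ becomes the unique element with target row $1$ and target column $j$, at which point $C_j$ is body-full. Second, the rightward rotation of $R$ at the end of the iteration cyclically shifts $\pi$ by one position, since the element at the new $R[j]$ is the one that previously sat at $R[j-1]$ (indices taken modulo $n$).

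Combining these two facts inductively, if $\pi_0$ denotes the initial target-column permutation and $\pi_k$ the one at the start of the $(k+1)$-th outer iteration, then $\pi_k(j) = \pi_0((j-k) \bmod n)$. For each target column $c$, the equation $\pi_k(c) = c$ therefore admits exactly one solution $k \in \{0,1,\dots,n-1\}$, namely $k \equiv c - \pi_0^{-1}(c) \pmod{n}$. Hence, during the $n$ outer iterations, each column is triggered exactly once; it becomes body-full at that instant and, since it is never triggered again and its body is not touched by later row rotations, it remains body-full until \FloatMinimums terminates.

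For the rotation count, each triggering of $C_c$ performs at most $m-1$ column rotations, as the sought target-row-$1$ element lies somewhere in a body of $m-1$ positions; summed over the $n$ triggerings this contributes at most $n(m-1)$ rotations, and the outer loop adds exactly $n$ rotations of $R$, for a total of $O(mn)$. I expect the only delicate point to be verifying the first local property---that near-fullness survives the inner \textbf{while}-loop and that the target column of the swapped-in element matches that of the swapped-out one---since the rest then follows cleanly from the cyclic-shift structure of $\pi$.
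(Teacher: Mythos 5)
Your proof is correct and follows essentially the same approach as the paper, which observes that $R$ contains exactly one element per target column, that a triggered column is made body-full in at most $m-1$ rotations and is never rotated again, and that $n$ row rotations suffice. Your tracking of the permutation $\pi$ and its cyclic shift just makes explicit and rigorous the paper's (asserted, not argued) claims that each column is triggered exactly once and never re-rotated afterward.
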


 \subsection{Transforming body-full columns into body-sorted columns}
In this section, we show how a configuration of an $m \times n$ matrix $A$, with $m \le n$, in which all columns are body-full, can be transformed into a configuration in which all columns are body-sorted using $O(mn \cdot \log m)$ rotations.

\subparagraph*{Sorting the body of a single column.}
As a warm-up, we start by sketching how a single body-full column $C_j$ can be made body-sorted using $O(n \cdot \log m)$ rotations. The procedure will only rotate $R$ and $C_j$, hence all elements in the body of columns other than $C_j$ will remain in their initial positions. In particular, any other column that is already body-full (resp.\ body-sorted) will remain body-full (resp.\ body-sorted). This also means that the set of elements in $R$ will remain unchanged, although their order might change.

The idea is that of simulating the RadixSort algorithm on the elements in the body of $C_j$. We achieve this by repeatedly \emph{unloading} the body of $C_j$ into $R$ and then re-loading the same elements back into $C_j$ in a suitable order.
In particular, the procedure consists of $\lceil \log(m-1) \rceil$ phases, where the $\ell$-th phase sorts the elements $x$ of $C_j$ according to the $\ell$-th least significant bit of (a suitably shifted version of) $\row(x)$.
This sorting step is stable; hence, after the last phase is completed, column $C_j$ is body-sorted. 

In details, a phase is subdivided into three sub-phases. In the first sub-phase we unload all elements of $C_j$ into $R$ with the sequence of rotations $(C_j R)^{m-1} R^{n-m+1}$, which also ensures that the overall number of rotations of $R$ is $n$. 
The second sub-phase rotates $R$ $n$ additional times and, before each rotation, we check whether the element $x$ on the head of $C_j$ should be moved to the body of $C_j$, namely if $\col(x) = j$, $\row(x) \ge 2$, and the $\ell$-th least significant bit $\lsb(\row(x)-2, \ell)$ of $\row(x)-2$ is set. 
At this point, $R$ has been rotated $2n$ times, and all the elements that were originally in the body of $C_j$ and have their $\ell$-th least significant bit set are now in the body of $C_j$.
The third and final sub-phase is similar to the second one, with the only difference that the condition  $\lsb(\row(x)-2, \ell) = 1$ is replaced with $\lsb(\row(x)-2, \ell) = 0$. This loads the remaining elements into $C_j$. Observe that in the second and third sub-phases, the elements inserted in $C_j$ are encountered in the same order in which they were unloaded, hence each phase is stable.

Unfortunately, while sequentially applying the above simulation of RadixSort on each column $C_j$ would render all columns body-full, this would require $\Omega(n^2 \cdot \log m)$ rotations, which is asymptotically larger than our $O(mn \cdot \log n)$ goal whenever $n = \omega( m )$. 

\begin{figure}[t]
    \centering
    \includegraphics{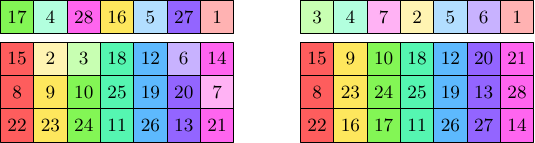}
    \caption{A possible input to \FloatMinimums, which corresponds to the output of \FillColumns in Figure~\ref{fig:fillcolumns} (left), and the resulting output (right). Elements belonging to different columns in the sorted configuration have different hues, and elements $x$ with $\row(x)=1$ are lighter. Notice how all columns of the output are body-full.}
    \label{fig:floatminimums}
\end{figure}

\subparagraph*{Sorting the bodies of up to $\frac{n}{m-1}$ columns.}

To circumvent this issue, we show how to generalize the above RadixSort-inspired scheme to handle a number $k$ of up to $\frac{n}{m-1}$ columns in parallel. The resulting procedure, which we name \RadixSortBodies, performs $O((n+km) \cdot \log m)$ rotations and its pseudocode is given in Algorithm~\ref{alg:RadixSortBodies}.

Let $\mathcal{C} = \{j_1, j_2, \dots, j_k\}$ be a set of $k$ column indices.
\RadixSortBodies starts by unloading, in parallel, the bodies of all columns $C_j$ with $j \in \mathcal{C}$ into $R$.
In order to do that, we choose $k$ pairwise-disjoint subsets $I_1, \dots, I_k$ of $\{ 1, \dots, n \}$, each containing $m-1$ elements, e.g., we can choose $I_h = \{ (h-1)(m-1) + 1, (h-1)(m-1) + 2, \dots, h(m-1) \}$.\footnote{The algorithm described previously for sorting the body of a single column corresponds to the case $k=1$, $\mathcal{C} = \{j_1\}$, and the choice $I_1 = \{1, \dots, j_1\} \cup \{n - m + j_1 + 2, \dots, n\}$ (where the second set of the union might be empty).}
Intuitively, set $I_h$ is associated to column $C_{j_h}$ and contains the $m-1$ indices $i$ corresponding to the positions $R[i]$ into which \RadixSortBodies unloads the elements in the body of $C_{j_h}$. 

As before, the procedure works in $\lceil \log(m-1) \rceil$ phases, where the $\ell$-th phases is concerned with sorting the elements of interest w.r.t.\ their $\ell$-th least significant bit. Similarly, each phase is also split into three sub-phases. 

The first sub-phase unloads all elements in the body of the columns $C_j$ with $j \in \mathcal{C}$ into $R$. In order to do so, we define $\rho(j, r)  = ((j - r - 1) \bmod n) + 1$ so that, as a result of performing $r$ rotations of $R$, the element that ends up in $R[j]$ is the one that was originally placed in $R[\rho(j, r)]$.
In this sub-phase, row $R$ is rotated $n$ times and, before each rotation, we check whether $\rho(j_h, r) \in I_h$ for each $h \in \mathcal{C}$. For all $j_h$ that satisfy the above condition, we rotate $C_{j_h}$, thus moving the element currently on the tail of $C_{j_h}$ into $R$.

\begin{algorithm}[t]
    \caption{Procedure \RadixSortBodies}
    \SetKwInput{Input}{Input}
    \SetKwInput{Output}{Output (in place)}
    \Input{An $m \times n$ matrix $A$ with $m \le n$ such that all columns of $A$ are body-full. A collection $\mathcal{C} = \{j_1, \dots, j_k\}$ of distinct column indices.}
    \Output{A matrix in which all columns $C_j$ with $j \in \mathcal{C}$ are body-sorted, and the body of each column $C_j$ with $j \not\in \mathcal{C}$ is unaffected.} \BlankLine
    \label{alg:RadixSortBodies}
   
    \For{$\ell = 1, 2, \dots, \lceil \log(m-1) \rceil$}
    {
        \tcp{First phase}
        \tcp{Unload the elements in the bodies of the columns in $\mathcal{C}$ into $R$}
        \For{$r = 0,1,\dots, n-1$}{
            \ForEach{$j_h \in \mathcal{C}$}{
                \If{$\rho(j_h, r) \in I_h$}
                {
                    Rotate $C_{j_h}$;
                }
            }
            Rotate $R$;
        }
        \BlankLine

        \tcp{Second and third phases (with $b=1$ and $b=0$, respectively)}
        \tcp{Load the elements back into the bodies of the columns in $\mathcal{C}$}
        \For{$b = 1, 0$}
        {
            \For{$n$ times}{
                \ForEach{$j \in \mathcal{C}$}{
                    \If{$\col(R[j])=j \wedge \row(R[j]) \ge 2 \wedge\lsb(\row(R[j])-2, \ell)=b$}
                    {
                        Rotate $C_j$;
                    }
                }
                Rotate $R$;
            }
        }
    }
\end{algorithm}

The second and third sub-phases behave similarly to one another, and handle the elements having their $\ell$-th least significant bit set to $1$ and $0$, respectively.
The second (resp.\ third) sub-phase, rotates $R$ $n$ times and, before each rotation, checks whether any of the elements currently on the heads the columns whose bodies are to be sorted should be loaded into their column's body. More precisely, if $x$ is the head of some column $C_{j}$ with $j \in \mathcal{C}$, this happens when $\col(x) = j$, $\row(x) \ge 2$, and $\lsb(\row(x)-2, \ell) = 1$ for the second sub-phase (resp.\ $\lsb(\row(x)-2, \ell) = 0$ for the third sub-phase). Notice that, within each of the last two sub-phases, the loaded elements are inserted in the bodies of the respective columns in the same order as they were unloaded in the first phase, hence the overall algorithm is stable. Figure~\ref{fig:radixsortbodies} shows some configurations encountered during an execution of \RadixSortBodies.

\subparagraph*{Sorting all column bodies.}
It is now easy to transform any matrix in which each column is body-full into
a matrix in which each column is body-sorted using $O(mn \cdot \log n)$ rotations.

Indeed, it suffices to arbitrarily partition the $n$ columns into $k = \left\lceil \frac{n}{ \lfloor \frac{n}{m-1} \rfloor } \right\rceil$ groups $G_1, G_2, \dots, G_k$ of $\big\lfloor \frac{n}{m-1} \big\rfloor$ columns each, except possibly for $G_k$ (which has size $n \bmod (m-1)$).
Then, we iteratively make the columns in each group $G_i$ body-full by invoking \RadixSortBodies on the columns in $G_i$, which requires $O\big( (n + \frac{n}{m-1} \cdot m) \cdot \log m  \big) = O(n \cdot \log m)$ rotations.
Since the number of groups is $O(m)$, the overall number of rotations is $O( m n \cdot \log m )$. 

\noindent The following lemma summarizes the discussion in this section.
\begin{lemma}
    \label{lemma:body-sorted}
    Given an $m \times n$ input matrix $A$ with $m \le n$ and in which all columns are body-full, there exists a procedure that performs $O(mn \cdot \log n)$ rotations and rearranges the elements in $A$ so that all columns become body-sorted.
\end{lemma}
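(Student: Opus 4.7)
The plan is to analyze the \RadixSortBodies procedure in detail and then derive the lemma by invoking it on a suitable partition of the $n$ columns into groups that can be sorted in parallel.

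First, I would establish correctness of \RadixSortBodies: at the end of each phase $\ell$, every column $C_{j_h}$ with $h \in \{1,\dots,k\}$ contains its original body elements stably sorted by the $\ell$-th least significant bit of $\row(x) - 2$, while the bodies of the columns outside $\mathcal{C}$ are untouched and the multiset of elements occupying $R$ is invariant. The non-interference part is immediate, since no column outside $\mathcal{C}$ is ever rotated and $R$ undergoes exactly $3n$ rotations per phase, restoring its original cyclic alignment. For the sorting part, the unload sub-phase rotates each $C_{j_h}$ exactly $m-1$ times, depositing the body elements of $C_{j_h}$ onto the positions $R[i]$ with $i \in I_h$ in the top-to-bottom order of the body. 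In each of the two load sub-phases, only the elements $x$ with the appropriate value of $\lsb(\row(x) - 2, \ell)$ are absorbed back into their target column, and the disjoint index sets $I_h$ are visited in the same cyclic order as during the unload, so the absorbed elements re-enter the body of $C_{j_h}$ in the same relative order in which they had been unloaded. The standard RadixSort correctness argument then shows that $\lceil \log(m-1) \rceil$ phases suffice to body-sort every column in $\mathcal{C}$.

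Next, I would bound the cost of \RadixSortBodies and assemble the final procedure. Each of the three sub-phases in a phase performs exactly $n$ rotations of $R$, and each $C_{j_h}$ is rotated at most $2(m-1)$ times per phase (the $m-1$ unload rotations plus up to $m-1$ load rotations split between the two load sub-phases), giving $O(n + km)$ rotations per phase and $O((n + km) \log m)$ in total. To sort all bodies, I would partition the $n$ column indices into $k^\ast = \lceil n / \lfloor n/(m-1) \rfloor \rceil = O(m)$ groups of size at most $\lfloor n/(m-1) \rfloor$ and invoke \RadixSortBodies on each group in turn. Since $km \le \frac{n}{m-1}\cdot m = O(n)$ for every group, each invocation costs $O(n \log m)$ rotations, and the non-interference property ensures that the body-full precondition is preserved across invocations. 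Summing over the $O(m)$ calls yields $O(mn \log m) \subseteq O(mn \log n)$ rotations, as claimed.

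The main obstacle I anticipate is making the stability claim for the load sub-phases rigorous: one must track, for every element pushed back into some $C_{j_h}$, the exact time $r$ at which its target condition first triggers, and argue that this induces the same cyclic order on the re-loaded elements as the unload produced on $R$. This amounts to showing that the condition $\rho(j_h, r) \in I_h$ driving the unload aligns correctly, across the period of $n$ rotations of $R$ in each sub-phase, with the conditions $\col(R[j_h]) = j_h \wedge \lsb(\row(R[j_h])-2, \ell) = b$ driving the loads. Once this bookkeeping invariant is pinned down, the rest of the argument reduces to routine accounting of rotation counts.
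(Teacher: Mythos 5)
Your proposal follows essentially the same route as the paper: you establish correctness of \RadixSortBodies via the stable RadixSort argument and bound its cost at $O((n+km)\log m)$ rotations, then partition the $n$ columns into $O(m)$ groups of at most $\lfloor n/(m-1)\rfloor$ columns and invoke \RadixSortBodies on each group, giving $O(mn\log m) \subseteq O(mn\log n)$ rotations in total. The stability bookkeeping you flag as the main obstacle is the same point the paper handles informally, by observing that the load sub-phases traverse the columns' heads in the same cyclic order as the unload sub-phase.
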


\begin{figure}[t]
    \centering
    \includegraphics{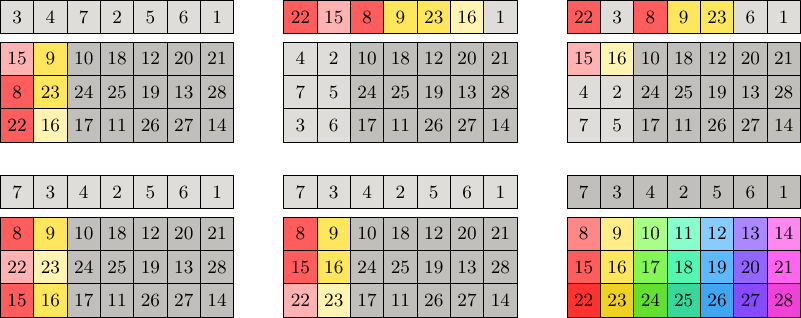}
    \caption{A possible input matrix for an execution \RadixSortBodies sorting $C_1$ and $C_2$ (top left) which coincides to that output by \FloatMinimums in Figure~\ref{fig:floatminimums}, along with the configurations resulting after the first, second, and third sub-phases of the first phase of \RadixSortBodies (top middle, top right, and bottom left, respectively). The remaining two configurations are those at the end of the second phase of the algorithm (bottom middle), and at the end of multiple calls to \RadixSortBodies which sort all column bodies. The elements $x$ that belong to the bodies of $C_1$ and $C_2$ in the input configuration are shown in red and yellow, respectively, while their lightnesses represents whether the relevant bit of $\row(x)-2$ is set (a brighter color corresponds to a set bit). The configuration on the bottom right colors elements according to their target position.}
    \label{fig:radixsortbodies}
\end{figure}

\subsection{Sorting the first row}

We now have a sortable matrix where all columns are body-sorted, which implies that $R$ contains all and only the elements $x$ with $\row(x) = 1$, and we are left with the task of sorting the elements in $R$.
In particular, the arrangement of elements in $R$ can be interpreted as that resulting from the action of some permutation $\pi_R$ on the set $\{1, \dots ,n\}$. Then, our goal becomes that of finding a sequence of row and column rotations whose net result is that of applying $\pi_R^{-1}$ to $R$. This turns out to be more or less challenging depending on the parity of $\pi_R$ and on that of $n$.\footnote{The elements of $R$, in order, coincide with those of the one-line representation of $\pi_R^{-1}$.}

In the rest of this section we build up towards this goal by first showing a way to apply an involution to $R$ while keeping side effects under control, and then using such result to handle general permutations.

\begin{algorithm}[t]
    \caption{Procedure $\SwapPairs$.}\SetKwInput{Input}{Input}
    \SetKwInput{Output}{Output (in place)}
    \Input{An $m \times n$ matrix $A$. A collection $\{ (c_1, c'_1), \dots, (c_k, c'_k)\}$ of pairs of column indices, all distinct. A direction $d \in \{ \texttt{U}, \texttt{D} \}$.}
\Output{The matrix obtained from $A$ by swapping $R[c_h]$ and $R[c'_h]$ and performing a circular shift of the body of $C_h$ in the direction specified by $d$, for all $h=1,\dots,k$.}
    \label{alg:SwapPairs}
  
    \tcp{First phase}
    \For{$r = 0, \dots, n-1$}{
        \For{$j = 1, \dots, k$}{
            \If{$\rho(j, r) = c_j$}{
                \leIf{$d = \texttt{D}$}{rotate $C_{j}$ once}{rotate $C_{j}$ $m-1$ times}
            }
        }
        Rotate $R$\;
    }
    
    \BlankLine
    \tcp{Second phase}
    \For{$r = 0, \dots, n-1$}{
        \For{$j = 1, \dots, k$}{
            \If{$\rho(j, r) = c'_j$}{
                \leIf{$d = \texttt{D}$}{rotate $C_{j}$ $m-1$ times}{rotate $C_{j}$ once}
            }
        }
        Rotate $R$\;
    }

    \BlankLine
    \tcp{Third phase}
    \For{$r = 0, \dots, n-1$}{
        \For{$j = 1, \dots, k$}{
            \If{$\rho(j, r) = c'_j$}{
                \leIf{$d = \texttt{D}$}{rotate $C_{j}$ once}{rotate $C_{j}$ $m-1$ times}
            }
        }
        Rotate $R$\;
    }
\end{algorithm}

\subparagraph*{Swapping pairs of elements in $R$.}
We start by describing an auxiliary procedure that takes as input a collection of $k$ pairs $(c_1, c'_1), \dots, (c_k, c'_k)$ of column indices, where all such indices are distinct, and swaps each element in $R[c_j]$ with that in $R[c'_j]$, for all $j=1,\dots,k$. We name this procedure \SwapPairs and we give its pseudocode in Algorithm~\ref{alg:SwapPairs}.

\SwapPairs performs $O(nk)$ rotations but produces some side effects. More precisely, the procedure needs to use a distinct auxiliary column for each pair $(c_j, c'_j)$ which, for the sake of simplicity, we chose to be column $C_j$. The elements \emph{in the body} of each auxiliary column $C_j$ will undergo a circular shift, which can be either upward (i.e., $C_j[i]$ moves to $C_j[i-1]$ for $i = 3, \dots, m$, and $C_j[2]$ moves to $C_j[m]$) or downward (i.e., $C_j[i]$ moves to $C_j[i+1]$ for $i = 2, \dots, m-1$, and $C_j[m]$ moves to $C_j[2]$). The direction of these rotations is controlled by an additional input parameter $d \in \{ \texttt{U}, \texttt{D} \}$, where $\texttt{U}$ (resp. $\texttt{D}$) denotes an upward (resp.\ downward) rotation.

We note that these side effects are unavoidable as long as the algorithm needs to work for matrices of arbitrary dimensions. Indeed, it is impossible to swap two (or any even number of pairs of) elements without introducing side effects, as this would allow one to change the parity of the permutation induced by positions of the elements in the matrix 
(see Remark~\ref{remark:sortable_characterization}).

\begin{figure}[t]
    \centering
    \includegraphics{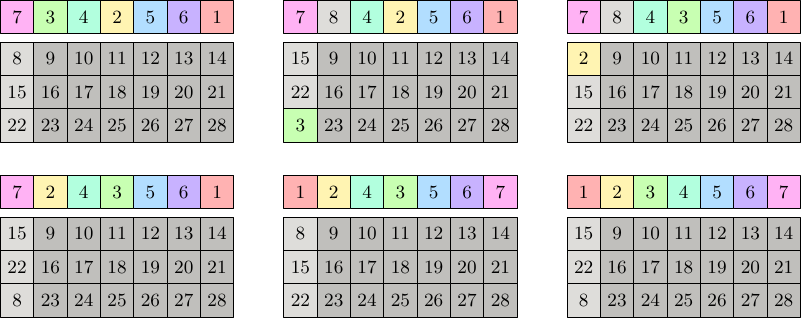}
    \caption{The input configuration to a call to \SwapPairs with $d = \texttt{U}$ swapping $R[2]$ and $R[4]$ (top left), which coincides with the last configuration of Figure~\ref{fig:radixsortbodies}, along with the configurations obtained after the first, second, and third phase (top middle, top right, and bottom left, respectively). Observe how the call causes an upward circular shift of the body of $C_1$ as a side effect. A further call to \SwapPairs with $d = \texttt{D}$ swaps $R[1]$ and $R[7]$ and recovers the original body of $C_1$ (bottom middle). A final call to \SwapPairs with $d = \texttt{U}$ swaps $R[3]$ and $R[4]$, which sorts $R$ but reintroduces the circular shift of the body of $C_1$ compared to the initial configuration (bottom right). Elements $x$ with $\row(x)=1$ are colored according to their target column.}
    \label{fig:first_row}
\end{figure}

The procedure works in three phases.
In the first phase, row $R$ is rotated $n$ times.
Before the generic $r$-th rotation, we check for which indices $j = 1, \dots, k$ the element originally in $R[c_j]$ is now in $R[j]$, i.e., i.e., we check whether $\rho(j, r) = c_j$. For any such index $j$, we rotate $C_j$ one time if $d = \texttt{D}$ or $m-1$ times if $d = \texttt{U}$ (effectively simulating an upward rotation of $C_j$).
Denoting with $S_1$ the sequence of all rotation performed in the first phase, the overall effect of the phase is the following, where $j = 1,\dots, k$ and all elements that not explicitly handled are unaffected:
\begin{itemize}
    \item When $d = \texttt{D}$, $R[c_j] \xrightarrow{S_1} C_j[2]$, $C_j[m] \xrightarrow{S_1} R[c_j]$, and $C_j[i] \xrightarrow{S_1} C_j[i+1]$ for $i = 2, \dots, m-1$;
    \item When $d = \texttt{U}$, $R[c_j] \xrightarrow{S_1} C_j[m]$, $C_j[2] \xrightarrow{S_1} R[c_j]$, and $C_j[i] \xrightarrow{S_1} C_j[i-1]$ for $i = 3,\dots,m$.  
\end{itemize}

In the second phase, $R$ is again rotated $n$ times. 
Before the generic $r$-th rotation, we find all columns $C_j$ that satisfy $\rho(j, r) = c'_j$, and we rotate each such column $m-1$ times if $d = \texttt{D}$, or one time if $d = \texttt{U}$.
The net effect of the sequence $S_2$ of rotations performed during the second phase (on the matrix resulting from the first phase) is the following: 
\begin{itemize}
    \item When $d = \texttt{D}$, $R[c'_j] \xrightarrow{S_2} C_j[m]$, $C_j[2] \xrightarrow{S_2} R[c'_j]$, and $C_j[i] \xrightarrow{S_2} C_j[i-1]$ for $i = 3,\dots,m$.  
    \item When $d = \texttt{U}$, $R[c'_j] \xrightarrow{S_2} C_j[2]$, $C_j[m] \xrightarrow{S_2} R[c'_j]$, and $C_j[i] \xrightarrow{S_2} C_j[i+1]$ for $i = 2, \dots, m-1$;
\end{itemize}
        
The third and final phase is identical to the first one, hence its sequence of rotations $S_3$ produces the same net effect (on the matrix resulting from the second phase) as that of sequence $S_1$. Figure~\ref{fig:first_row} shows some example configurations obtained after the different phases, and after successive executions of \SwapPairs.

\noindent We can now use \SwapPairs to apply any involution to $R$, as the following lemma shows.

\begin{lemma}
    \label{lemma:apply_involution_to_R}
    Consider an $m \times n$ matrix $A$. Given an involution $\pi$ on $\{1, \dots, n\}$ and $d \in \{ \texttt{U}, \texttt{D} \}$, there exists a procedure that uses $O(a_2(\pi) m + n)$ rotations and produces the following effects: (i) permutes $R$ according to $\pi$, and (ii) if $d = \texttt{U}$ (resp.\ $d = \texttt{D}$) performs a cyclic upward (resp.\ downward) rotation of the body of each $C_j$ with $j=1,2,\dots, a_2(\pi)$.
\end{lemma}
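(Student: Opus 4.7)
The plan is to realize $\pi$ by decomposing it into its transpositions and handing them to \SwapPairs in a single call — the key observation being that an involution naturally produces exactly the kind of disjoint pair collection that \SwapPairs expects. Since $\pi$ is an involution on $\{1,\ldots,n\}$, its disjoint cycle decomposition consists of exactly $k := a_2(\pi)$ transpositions $(c_1,c'_1),\ldots,(c_k,c'_k)$ together with some fixed points. Because distinct cycles of a permutation share no elements, the $2k$ indices $c_1,c'_1,\ldots,c_k,c'_k$ are pairwise distinct and thus form a valid input to \SwapPairs.

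I would then invoke \SwapPairs with the collection $\{(c_1,c'_1),\ldots,(c_k,c'_k)\}$ and direction $d$. By the specification of \SwapPairs, this swaps the content of $R[c_h]$ with that of $R[c'_h]$ for every $h=1,\ldots,k$ while leaving every other position of $R$ untouched; this coincides with the action of $\pi$ on $R$, since the indices not covered by any $(c_h, c'_h)$ are exactly the fixed points of $\pi$. The remaining effect of \SwapPairs is, by design, a cyclic shift of the body of each auxiliary column $C_1,\ldots,C_k$ in the direction selected by $d$, which matches part (ii) of the statement verbatim.

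For the cost, each of the three phases of \SwapPairs performs $n$ rotations of $R$, contributing $3n$ row rotations in total. Within any given phase, for each pair index $j \in \{1,\ldots,k\}$ the guard condition involving $\rho(j,r)$ is satisfied for exactly one value of $r \in \{0,\ldots,n-1\}$, and whenever it fires only $O(m)$ downward rotations of $C_j$ are needed (at most $m-1$, in order to simulate a unit upward rotation of $C_j$). Summing over the three phases and the $k$ pairs gives $3n + O(km) = O(a_2(\pi)\, m + n)$, matching the claimed bound. I expect the only real subtlety to be checking that the argument goes through when a swap endpoint $c_h$ or $c'_h$ happens to equal an auxiliary-column index in $\{1,\ldots,k\}$; however, \SwapPairs was designed so that the bodies of $C_1,\ldots,C_k$ undergo the prescribed cyclic shift regardless of whether they also host swap endpoints, so no additional case analysis is required.
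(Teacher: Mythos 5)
Your proposal is correct and follows essentially the same approach as the paper: decompose the involution $\pi$ into its $a_2(\pi)$ disjoint transpositions $(c_h, c'_h)$, make a single call to \SwapPairs with these pairs and direction $d$, and observe that the output specification of \SwapPairs is precisely the effect claimed in the lemma. The paper's proof is slightly more explicit in that it re-derives the correctness of the call by chaining the per-phase effects (i.e.\ tracing $R[c_j] \xrightarrow{S_1} \cdot \xrightarrow{S_2} \cdot \xrightarrow{S_3} R[c'_j]$, etc.) rather than citing the \SwapPairs specification wholesale, and it does not spell out the $O(a_2(\pi)m + n)$ cost accounting inside the lemma's proof; your explicit cost analysis (noting that each firing of the guard triggers at most $m-1$ unit rotations of a column, and that each phase rotates $R$ exactly $n$ times) is a welcome addition and matches the claimed bound.
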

\begin{proof}
    Let $\tau_1, \tau_2, \dots, \tau_k$ be the $k = a_2(\pi)$ transpositions of $\pi$, where $\tau_j = (c_j \; c'_j)$.
    We invoke \SwapPairs using the same value of $d$ as the one in the statement and with the $k$ pairs of column indices $(c_j, c'_j)$ for $j=1,\dots,k$. 

    We now prove that such a call has the desired effect. To see that each pair of elements $R[c_j], R[c'_j]$ with $j=1,\dots,k$ is indeed swapped, observe that:
    \begin{itemize}
        \item When $d = \texttt{D}$, $R[c_j] \xrightarrow{S_1} C_j[2] \xrightarrow{S_2} R[c'_j] \xrightarrow{S_3} R[c'_j]$ and $R[c'_j] \xrightarrow{S_1} R[c'_j] \xrightarrow{S_2} C_j[m] \xrightarrow{S_3} R[c_j]$.
        \item When $d = \texttt{U}$, $R[c_j] \xrightarrow{S_1} C_j[m] \xrightarrow{S_2} R[c'_j] \xrightarrow{S_3} R[c'_j]$ and $R[c'_j] \xrightarrow{S_1} R[c'_j] \xrightarrow{S_2} C_j[2] \xrightarrow{S_3} R[c_j]$.
    \end{itemize}

    To see that the body of each column $C_j$ with $j \in \{1, \dots, k\}$ undergoes the claimed circular shift, observe that:
    \begin{itemize}
        \item When $d = \texttt{D}$, $C_j[i] \xrightarrow{S_1} C_j[i+1] \xrightarrow{S_2} C_j[i] \xrightarrow{S_3} C_j[i+1]$ for $i=2,\dots,m-1$, and $C_j[m] \xrightarrow{S_1} R[c_j] \xrightarrow{S_2} R[c_j] \xrightarrow{S_3} C_j[2]$.
        \item When $d = \texttt{U}$, $C_j[i] \xrightarrow{S_1} C_j[i-1] \xrightarrow{S_2} C_j[i] \xrightarrow{S_3} C_j[i-1]$ for $i=3,\dots,m$, and $C_j[2] \xrightarrow{S_1} R[c_j] \xrightarrow{S_2} R[c_j] \xrightarrow{S_3} C_j[m]$.
    \end{itemize}

    To conclude the proof, notice that all elements not explicitly discussed above retain their positions after the rotations in $S_1$, $S_2$, and $S_3$.
\end{proof}

\subparagraph*{Decomposing $\pi_R$ into involutions and transpositions.}

\begin{figure}
    \centering
    \includegraphics[width=\textwidth]{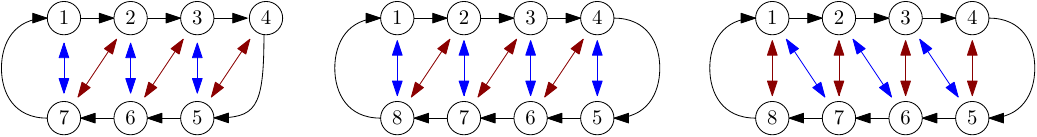}
    \caption{A visualization of the pairs of involutions $\sigma$ and $\upsilon$ (left), $\sigma^+$ and $\upsilon^+$ (middle), and $\sigma^-$ and $\upsilon^-$ (right) for even and odd permutation cycles (in black). Involutions $\sigma$, $\sigma^+$, and $\sigma^-$ are shown in blue and are to be applied first, while involutions $\upsilon$, $\upsilon^+$, and $\upsilon^-$ are shown in red.}
    \label{fig:breaking_cycles}
\end{figure}

We now employ the above result to place all elements in $R$ in their target position, while keeping all columns $C_2, \dots, C_n$ body sorted, and possibly causing circular shift in the body of $C_1$.

It is folklore that any permutation $\pi$ can be written as the product of two involutions (see, e.g., \cite{YangEMR13}). One can also ensure that, when $\pi$ is even, these two involutions have the same number of transpositions, while the odd case requires one more transposition.  We formally state this result in the following two lemmas, in a way that is convenient for the sequel.

\begin{lemma}
    \label{lemma:cycle_to_involutions}
    Let $\pi$ be a permutation that consists of a single non-trivial cycle. 
    If $\pi$ is even, then there exists two involutions $\sigma$, $\upsilon$ such that $\pi = \sigma \upsilon$ and $a_2(\sigma) = a_2(\upsilon)$.
    If $\pi$ is odd, then there exists four involutions $\sigma^+$, $\upsilon^+$, $\sigma^-$, $\upsilon^-$ such that $\pi = \sigma^+ \upsilon^+ = \sigma^- \upsilon^-$, $a_2(\sigma^+) = a_2(\upsilon^+) + 1$, and $a_2(\sigma^-) = a_2(\upsilon^-) - 1$.
\end{lemma}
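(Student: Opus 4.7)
The plan is to interpret the cycle $\pi$ geometrically as a rotation and to decompose it as a product of two reflections, then count transpositions in each reflection according to whether its axis passes through vertices. After relabeling I may assume $\pi = (1, 2, \ldots, k)$, and identify $\{1, \ldots, k\}$ with the vertices of a regular $k$-gon inscribed in the unit circle, vertex $i$ at angle $(i-1)\cdot 2\pi/k$, so that $\pi$ corresponds to rotation by $2\pi/k$. By standard Euclidean geometry, the left-to-right product $\alpha\beta$ of two reflections across axes through the origin at angles $\theta_1$ and $\theta_2$ is the rotation by $2(\theta_2 - \theta_1)$; hence any pair of reflections $\sigma, \upsilon$ whose axes are at angular distance exactly $\pi/k$ automatically satisfies $\sigma\upsilon = \pi$.

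Each such reflection restricts to an involution of the vertex set, and its fixed points are exactly the vertices lying on its axis. The $k$ symmetry axes of the regular $k$-gon are those at angles $i\pi/k$ for $i = 0, \ldots, k-1$. If $k$ is odd (so $\pi$ is even, since a $k$-cycle has parity $k-1$), every such axis passes through exactly one vertex and the midpoint of the opposite edge, giving $1$ fixed point and $(k-1)/2$ transpositions per reflection. If $k$ is even (so $\pi$ is odd), the axes alternate between \emph{vertex-axes} (through two antipodal vertices: $2$ fixed points, $(k-2)/2$ transpositions) and \emph{edge-axes} (through two antipodal edge midpoints: $0$ fixed points, $k/2$ transpositions).

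For the even case, I pick $\sigma$ to be any symmetry-axis reflection and $\upsilon$ the reflection whose axis is $\pi/k$ further; both are of the single available type, so $a_2(\sigma) = a_2(\upsilon) = (k-1)/2$, as required. For the odd case, the alternation of types enables two distinct pairings. Taking $\sigma^+$ as an edge-axis reflection and $\upsilon^+$ as the vertex-axis reflection $\pi/k$ further gives $a_2(\sigma^+) = k/2$ and $a_2(\upsilon^+) = (k-2)/2$, hence $a_2(\sigma^+) = a_2(\upsilon^+) + 1$. Symmetrically, taking $\sigma^-$ as a vertex-axis reflection and $\upsilon^-$ as the edge-axis reflection $\pi/k$ further gives $a_2(\sigma^-) = (k-2)/2$ and $a_2(\upsilon^-) = k/2$, hence $a_2(\sigma^-) = a_2(\upsilon^-) - 1$. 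In all three pairings the axes differ by $\pi/k$, so each product equals $\pi$.

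The only genuinely subtle point is aligning the paper's left-to-right permutation convention with the standard geometric composition order for reflections, so I would sanity-check the construction on $k = 3$ (an even cycle) and $k = 4$ (an odd cycle), confirming both that each product realizes $\pi$ and that the transposition counts match. Once this bookkeeping is done, the lemma reduces entirely to counting which vertices lie on each reflection axis, which is precisely what the classification of symmetry axes of the regular $k$-gon supplies.
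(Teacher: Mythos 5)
Your proof is correct and takes a genuinely different route from the paper's. The paper writes down explicit products of transpositions: for $\pi = (1 \; 2 \; \cdots \; \ell)$ with $\ell = 2k+1$ it takes $\sigma = (1 \; 2k{+}1)(2 \; 2k)\cdots(k \; k{+}2)$ and $\upsilon = (2 \; 2k{+}1)(3 \; 2k)\cdots(k{+}1 \; k{+}2)$, and similarly for $\ell = 2k$ it gives two alternative pairs, so verification is a direct (if slightly tedious) index computation. Your approach packages exactly the same involutions as reflections in the dihedral group of the regular $k$-gon and reduces the product identity to the classical fact that a rotation is the composition of two reflections whose axes meet at half the rotation angle. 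This has a real payoff: the case distinction in the lemma becomes a one-line geometric observation (for odd $k$ there is a single type of axis, while for even $k$ vertex-axes and edge-axes alternate), and the transposition counts $a_2$ are simply ``how many vertices lie on the axis,'' which is obvious from the picture. The price is the bookkeeping you already flagged: you must fix an orientation convention so that ``the axis at angle $\theta_\sigma$ followed by the axis at angle $\theta_\sigma + \pi/k$'' yields $\pi$ rather than $\pi^{-1}$ under the paper's left-to-right product, and you must note that the symmetry axes live in $\mathbb{R}/\pi\mathbb{Z}$ so that ``$\pi/k$ further'' always lands on another symmetry axis. These are exactly the issues a spot-check on $k=3$ and $k=4$ resolves, as you propose, and since the statement is existential, even an orientation slip (which would produce $\pi^{-1}$, still a single cycle of the same length) would not invalidate the conclusion after relabeling. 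In short, your proof is sound; it trades the paper's explicit formulas for a cleaner conceptual explanation of why the transposition counts come out as claimed.
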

\begin{proof}
    Assume w.l.o.g.\ that $\pi = (1 \; 2 \; \dots \; \ell)$. The proof is split in two cases, depending on the parity of $\pi$.
    If $\pi$ is even, then $\ell = 2k + 1$ for some positive integer $k$. We choose $\sigma = (1 \; 2k + 1) (2 \; 2k) (3 \; 2k-1) \dots (k \; k+2)$, and $\upsilon = (2 \; 2k+1)(3 \; 2k) (4 \; 2k-1) \dots (k+1 \; k+2)$. Observe that $a_2(\sigma) = a_2(\upsilon) = k$. An example is shown in Figure~\ref{fig:breaking_cycles}~(left).

    If $\pi$ is odd, then $\ell = 2k$ for some positive integer $k$.
    We choose $\sigma^+ = (1 \; 2k)(2 \; 2k - 1)(3 \; 2k - 2) \dots (k \; k+1)$, and $\upsilon^+ = (2 \; 2k)(3 \; 2k - 1) (4 \; 2k-2) \dots (k \; k + 2)$. Observe that $a_2(\sigma^+) = k$ and $a_2(\upsilon^+) = k - 1$. An example is shown in Figure~\ref{fig:breaking_cycles}~(middle).
    We also choose $\sigma^- = (1 \; 2k-1) (2 \; 2k-2) (3 \; 2k-3) \dots (k-1 \; k+1)$ and $\upsilon^- = (1 \; 2k) (2 \; 2k-1) (3 \; 2k-2) \dots (k \; k+1)$. Observe that $a_2(\sigma^-) = k-1$ and $a_2(\upsilon^-) = k$. An example is shown in Figure~\ref{fig:breaking_cycles}~(right).
\end{proof}

\begin{lemma}
    \label{lemma:pemutation_to_involutions}
    Let $\pi$ be a permutation.
    If $\pi$ is even, then there exist two involutions $\sigma, \upsilon$ such that $\pi = \sigma \upsilon $, and $a_2(\sigma) = a_2(\upsilon)$.
    If $\pi$ is odd, then there exist two involutions $\sigma, \upsilon$ and a transposition $\tau$ such that $\pi = \sigma \upsilon \tau $, and $a_2(\sigma) = a_2(\upsilon)$.
\end{lemma}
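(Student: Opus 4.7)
The plan is to reduce the claim to Lemma~\ref{lemma:cycle_to_involutions} by decomposing $\pi$ into its disjoint cycles and applying the single-cycle result to each one independently. Writing $\pi = C_1 C_2 \cdots C_r$ for the non-trivial cycles of $\pi$ (fixed points contribute nothing), Lemma~\ref{lemma:cycle_to_involutions} produces for each $C_i$ a pair of involutions $(\sigma_i, \upsilon_i)$ with $C_i = \sigma_i \upsilon_i$. I then set $\sigma = \sigma_1 \sigma_2 \cdots \sigma_r$ and $\upsilon = \upsilon_1 \upsilon_2 \cdots \upsilon_r$. Since the support of each $\sigma_i$ and $\upsilon_i$ is contained in the support of $C_i$, and the $C_i$ have pairwise-disjoint supports, the $\sigma_i$'s and $\upsilon_j$'s with $i \neq j$ all commute in pairs. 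This guarantees both that $\sigma$ and $\upsilon$ are involutions and that $\pi = \prod_i (\sigma_i \upsilon_i) = (\prod_i \sigma_i)(\prod_i \upsilon_i) = \sigma \upsilon$.

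The only remaining task in the even case is to choose the per-cycle decompositions so that $a_2(\sigma) = a_2(\upsilon)$. Odd-length cycles are even permutations, and Lemma~\ref{lemma:cycle_to_involutions} already provides a balanced pair $a_2(\sigma_i) = a_2(\upsilon_i)$ for each such $C_i$. Even-length cycles are odd permutations, and for each such $C_i$ the lemma offers a choice between a $(\sigma^+, \upsilon^+)$ decomposition with $a_2(\sigma_i) - a_2(\upsilon_i) = +1$ and a $(\sigma^-, \upsilon^-)$ decomposition with $a_2(\sigma_i) - a_2(\upsilon_i) = -1$. Since $\pi$ is even, the number of its even-length cycles must itself be even, say $2t$, because the parity of a permutation equals $\sum_i (|C_i| - 1) \bmod 2$ and only even-length cycles contribute an odd summand. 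I would then pair those $2t$ cycles arbitrarily and assign one cycle of each pair the $+$ decomposition and the other the $-$ decomposition, so that the imbalances cancel and $a_2(\sigma) - a_2(\upsilon) = \sum_i (a_2(\sigma_i) - a_2(\upsilon_i)) = 0$.

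The odd case reduces cleanly to the even one. Pick any transposition $\tau$; since $\tau$ is odd, $\pi \tau$ is even. Applying the even case to $\pi \tau$ produces involutions $\sigma, \upsilon$ with $a_2(\sigma) = a_2(\upsilon)$ and $\pi \tau = \sigma \upsilon$, and because $\tau$ is its own inverse this rearranges to $\pi = \sigma \upsilon \tau$, exactly the required form.

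The main delicate point I anticipate is the parity bookkeeping: one must verify both that the combined $\sigma$ and $\upsilon$ are genuine involutions (which relies on the disjoint-support commutativity of the per-cycle factors) and that an even permutation necessarily has an even number of even-length cycles, so the pairing step is always well-defined. Once those two facts are in hand, the rest is routine combination of per-cycle decompositions.
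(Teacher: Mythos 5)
Your proposal is correct. The even case follows the paper's proof essentially verbatim: decompose $\pi$ into disjoint cycles, apply Lemma~\ref{lemma:cycle_to_involutions} to each, and exploit the fact that an even $\pi$ has an even number of even-length (odd-parity) cycles so that the $+1$/$-1$ imbalances can be paired off. (The paper alternates the superscripts $\sigma^-_1\sigma^+_2\sigma^-_3\cdots$, you pair arbitrarily; the effect is identical.) Where you genuinely diverge is the odd case. The paper keeps the alternating construction for the first $k-1$ odd-parity cycles, uses the $(\sigma^-_k,\upsilon^-_k)$ decomposition on the last one, and then \emph{removes} a transposition $\tau$ from $\upsilon^-_k$, setting $\upsilon'_k = \upsilon^-_k\tau$, so that the two involutions balance and the leftover $\tau$ is exhibited explicitly. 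You instead fix an arbitrary transposition $\tau$ up front, observe that $\pi\tau$ is even, invoke the already-proved even case on $\pi\tau$ to get $\pi\tau = \sigma\upsilon$ with $a_2(\sigma)=a_2(\upsilon)$, and conclude $\pi = \sigma\upsilon\tau$ since $\tau$ is an involution. This is shorter and more modular — it turns the odd case into a corollary of the even case rather than a parallel construction — at the cost of giving less explicit control over which transposition plays the role of $\tau$ (the paper's $\tau$ is by construction a transposition inside the last odd-parity cycle, yours is arbitrary). For the downstream use in Algorithm~\ref{alg:sorting}, which only needs \emph{some} such $\tau$, this freedom is harmless, so your route is a valid and arguably cleaner alternative.
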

\begin{proof}
    Let $C^{\text{even}}_1, C^{\text{even}}_2, \dots C^{\text{even}}_h$ and $C^{\text{odd}}_1, C^{\text{odd}}_2, \dots, C^{\text{odd}}_k$ be the non-trivial even and odd cycles of $\pi$, respectively.
    Lemma~\ref{lemma:cycle_to_involutions} shows that we can write each even cycle $C^{\text{even}}_i$ as $C^{\text{even}}_i = \sigma_i \upsilon_i$, where $\sigma_i$ and $\upsilon_i$ are two involutions with the same number of transpositions. We let $\sigma^{\text{even}} = \sigma_1 \sigma_2 \dots \sigma_h$ and $\upsilon^{\text{even}} = \upsilon_1 \upsilon_2 \dots \upsilon_h$.
    Similarly, Lemma~\ref{lemma:cycle_to_involutions} shows that we can write each odd cycle $C^{\text{odd}}_i$ as both $C^{\text{odd}}_i = \sigma^+_i \upsilon^+_i$ and $C^{\text{odd}}_i = \sigma^-_i \upsilon^-_i$, where each of $\sigma^+_i$, $\upsilon^+_i$, $\sigma^-_i$, and $\upsilon^-_i$ is an involution, $a_2(\sigma^+_i) = a_2(\upsilon^+_i) + 1$, and $a_2(\sigma^-_i) = a_2(\upsilon^-_i) - 1$.

    The rest of the proof depends on the parity of $\pi$.
    We consider the case in which $\pi$ is even first, which implies that $k$ is also even. 
    We let $\sigma^{\text{odd}} = \sigma^-_1 \sigma^+_2 \sigma^-_3 \sigma^+_4 \sigma^-_5 \dots \sigma^+_k$ and $\upsilon^{\text{odd}} = \upsilon^-_1 \upsilon^+_2 \upsilon^-_3 \upsilon^+_4 \upsilon^-_5 \dots \upsilon^+_k$ (notice the alternating superscripts).
    The claim follows by choosing $\sigma = \sigma^{\text{even}} \sigma^{\text{odd}}$ and $\upsilon = \upsilon^{\text{even}} \upsilon^{\text{odd}}$.

    We now consider the case in which $\pi$ is odd, which implies that $k$ is also odd.
    We let $\tau$ be an arbitrary transposition of $\upsilon^-_k$, and we define $\upsilon'_k = \upsilon^-_k \tau$ as the involution obtained by removing $\tau$ from $\upsilon^-_k$ (observe that $\tau$ is its own inverse), so that $a_2(\sigma^-_k) = a_2(\upsilon^-_k) - 1 = a_2(\upsilon'_k)$.  We let $\sigma^{\text{odd}} = \sigma^-_1 \sigma^+_2 \sigma^-_3 \sigma^+_4 \sigma^-_5 \dots \sigma^+_{k-1} \sigma^-_k$ and $\upsilon^{\text{odd}} = \upsilon^-_1 \upsilon^+_2 \upsilon^-_3 \upsilon^+_4 \upsilon^-_5 \dots \upsilon^+_{k-1} \upsilon'_k$ (notice the alternating superscripts).
   The claim follows by choosing $\sigma = \sigma^{\text{even}} \sigma^{\text{odd}}$, $\upsilon = \upsilon^{\text{even}} \upsilon^{\text{odd}}$, and $\tau$ as defined above.
\end{proof}

\subparagraph*{Applying $\pi^{-1}_R$ to $R$.}

We can finally show how the elements in $R$ can be sorted, i.e., permuted according to $\pi^{-1}_R$.
We start with a simplification: we assume w.l.o.g.\ that if $\pi_R$ is odd, then $m$ is even.
Indeed, the case in which $\pi_R$ is odd and $m$ is odd can be avoided by noticing that, whenever this happens, $n$ must be even since otherwise the matrix would not be sortable (see Remark~\ref{remark:sortable_characterization}). Hence, we can reduce to the case in which $\pi_R$ is even by rotating $R$.

We proceed differently depending on the parity of $\pi_R$.
If $\pi_R$ is even, we use Lemma~\ref{lemma:pemutation_to_involutions} to write $\pi_R^{-1}$ as the product $\sigma \upsilon$ of two involutions $\sigma, \upsilon$ with $a_2(\sigma) = a_2(\upsilon)$.
Then we execute procedure \SwapPairs twice: the first call applies $\sigma$ to $R$ and uses $d=\texttt{U}$, while the second applies $\upsilon$ to the resulting configuration and uses $d=\texttt{D}$. Invoking Lemma~\ref{lemma:apply_involution_to_R} twice shows that all rotations caused to the column bodies cancel out, while the elements originally in $R$ get permuted according to $\pi_R^{-1} = \sigma \upsilon$, i.e., they end up in the sorted configuration.

\begin{figure}[t]
    \centering
    \includegraphics{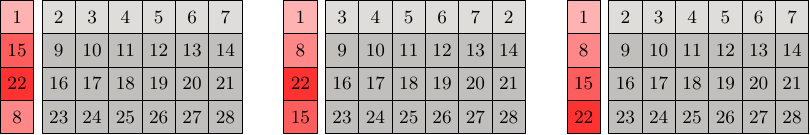}
    \caption{The configuration resulting from applying $\sigma = (2 \; 4)$, $\upsilon = (1 \; 7)$, and $\tau = (3 \; 4)$ to $R$ starting from the configuration in the top left of Figure~\ref{fig:first_row} (top left). Observe how the swaps implementing $\sigma$, $\upsilon$, and $\tau$ are exactly those shown in Figure~\ref{fig:first_row}, and how the body of $C_1$ undergoes a circular shift.
    The correct order of the elements in $C_1$ is restored by two consecutive calls to \SwapPairs (on the transposed matrix) applying involutions $\sigma_1=(2 \; 4)$ (top middle) and $\upsilon_1 = (3 \; 4)$ (top right). These latter two calls first introduce and then undo a circular shift of $R[2], R[3], \dots, R[n]$. The elements in $C_1$ are colored in red, where different lightnesses correspond to different target rows. A small gap has been added to the right of column $C_1$ for improved visual clarity.}
    \label{fig:first_column}
\end{figure}

The only remaining case is the one in which $\pi_R$ is odd and $m$ is even.
In this case, we use Lemma~\ref{lemma:pemutation_to_involutions} to write $\pi_R^{-1}$ as the product $\sigma \upsilon \tau$ of two involutions $\sigma, \upsilon$ with $a_2(\sigma) = a_2(\upsilon)$, and a transposition $\tau$.
We call procedure \SwapPairs five times. The first two calls apply involutions $\sigma$ and $\upsilon$ to $R$, in this order, with $d=\texttt{U}$ and $d=\texttt{D}$ respectively. Using Lemma~\ref{lemma:apply_involution_to_R} twice, we have that the combined effect of these two calls to $\SwapPairs$ leaves the positions of the elements not in $R$ unaffected.
The third call swaps the elements involved in the transposition $\tau$, while causing a circular shift in body of $C_1$ (the direction of such shift is not important).
Since $m$ is even, the number of elements in the body of $C_1$ is odd, and the permutation $\pi_1$ describing their circular shift is even, hence it can be written as the product $\sigma_1 \tau_1$ of two involutions $\sigma_1 \tau_1$ with the same number of transpositions.
Since we can equivalently consider the transposed version $A^T$ of matrix $A$, we use the fourth and fifth calls to \SwapPairs to apply $\sigma_1$ and $\tau_1$, in this order, to the first row of $A^T$, i.e., to column $C_1$ of $A$. By using opposite values of $d$ in these two final calls, we ensure any side effects introduced by the fourth call are undone by the fifth and last call. An example is provided in Figure~\ref{fig:first_column}.
The above discussion is summarized in the following lemma.

\begin{lemma}
    \label{lemma:sorting_R}
    Given an $m \times n$ matrix $A$ in which each column is body-sorted, there exists a procedure that performs $O(mn)$ rotations and sorts $A$.
\end{lemma}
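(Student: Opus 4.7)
My plan is to combine the three ingredients already prepared in the section: the involution decomposition (Lemma~\ref{lemma:pemutation_to_involutions}), the controlled-side-effect swap primitive (Lemma~\ref{lemma:apply_involution_to_R} via \SwapPairs), and the sortability characterization (Remark~\ref{remark:sortable_characterization}). Since every column is body-sorted, row $R$ contains exactly the elements $x$ with $\row(x)=1$, and the task reduces to applying $\pi_R^{-1}$ to $R$ while ultimately leaving the bodies of all columns unchanged. I would begin by dispatching the awkward parity case: if $\pi_R$ is odd and $m$ is odd, then $n$ must be even by Remark~\ref{remark:sortable_characterization}; a single rotation of $R$ composes $\pi_R$ with the long cycle $(1\,2\,\cdots\,n)$, which has parity $n-1$, so the parity of $\pi_R$ flips. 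Hence I may assume that either $\pi_R$ is even, or $\pi_R$ is odd and $m$ is even.

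In the even case I invoke Lemma~\ref{lemma:pemutation_to_involutions} to write $\pi_R^{-1}=\sigma\upsilon$ with $a_2(\sigma)=a_2(\upsilon)=:k$, then call \SwapPairs twice: once with direction $\texttt{U}$ to realize $\sigma$, then once with $\texttt{D}$ to realize $\upsilon$. By Lemma~\ref{lemma:apply_involution_to_R}, the first call cyclically shifts the bodies of $C_1,\dots,C_k$ upward by one position, and the second shifts the same $k$ bodies downward by one position, so the two side effects cancel exactly, while $R$ is acted upon by $\sigma\upsilon=\pi_R^{-1}$.

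In the odd case I write $\pi_R^{-1}=\sigma\upsilon\tau$ as in Lemma~\ref{lemma:pemutation_to_involutions} and apply $\sigma,\upsilon$ with opposite directions exactly as above, after which only $\tau$ remains to be performed. A third call to \SwapPairs executes $\tau$, leaving a cyclic shift of the $m-1$ body entries of $C_1$. Because $m$ is even, $m-1$ is odd, so this shift, viewed as a permutation of $m-1$ elements, is a single odd-length cycle and hence \emph{even}. Switching to the transposed viewpoint, $C_1$ is the first row of the $n\times m$ matrix $A^T$, and I can apply Lemma~\ref{lemma:pemutation_to_involutions} to the inverse of this cyclic shift to obtain $\sigma_1\upsilon_1$ with $a_2(\sigma_1)=a_2(\upsilon_1)$; two further \SwapPairs calls on $A^T$, again with opposite direction parameters, realize $\sigma_1$ and $\upsilon_1$ with cancelling side effects, restoring $C_1$ to its sorted state without disturbing the rest.

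For the count, each \SwapPairs invocation on a matrix with row length $N$, column length $M$, and $k$ pairs performs three passes that together cost $O(N)$ row rotations and $O(kM)$ column rotations. The two (or three) calls on $A$ have $k\le n/2$, $N=n$, $M=m$, costing $O(mn)$ each; the two cleanup calls on $A^T$ have $k\le (m-1)/2$, $N=m$, $M=n$, also costing $O(mn)$ each. A constant number of calls yields the claimed $O(mn)$ bound. The subtle point I expect to have to argue most carefully is that the two cleanup calls on $A^T$ genuinely do cancel each other's side effects without disturbing $R[2],\dots,R[n]$ or the other columns' bodies: one must verify, through the effect specification of Lemma~\ref{lemma:apply_involution_to_R} re-read in transposed coordinates, that the auxiliary ``columns'' of $A^T$ used by the two calls coincide and that the induced circular shifts along rows $R_2,\dots,R_{a_2(\sigma_1)+1}$ of $A$ are exact inverses, so that the final configuration of $A$ is fully sorted.
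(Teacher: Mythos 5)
Your proposal matches the paper's proof essentially step for step: the same parity reduction via a rotation of $R$, the same decomposition of $\pi_R^{-1}$ into two (or two plus one) involutions via Lemma~\ref{lemma:pemutation_to_involutions}, the same pair of \SwapPairs calls with opposite direction parameters so that the side effects of Lemma~\ref{lemma:apply_involution_to_R} cancel, and the same transposed cleanup of $C_1$ (five calls in total in the odd case). The rotation count and the observation that $m$ even makes the residual $C_1$-shift an even permutation are also identical to the paper's argument.
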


Algorithm~\ref{alg:sorting} shows the final pseudocode for solving the Torus Puzzle when $m \le n$ (which can be assumed w.l.o.g.), where lines from \ref{ln:sorting_R_start} onward handle sorting $R$ as described above.

\noindent The main result of this paper follows from combining Corollary~\ref{cor:near_full} and Lemmas~\ref{lemma:body_full}, \ref{lemma:body-sorted}, and \ref{lemma:sorting_R}. 

\begin{theorem}
    There exists an algorithm that, given a sortable $m \times n$ instance of the Torus Puzzle, sorts it by performing $O(mn \cdot \log \max\{m,n\})$ rotations, where each rotation is either a unit downward rotation of a column or unit rightward rotation of a row.
\end{theorem}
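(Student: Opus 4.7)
The plan is to simply assemble the four procedures whose behavior has already been nailed down into a single pipeline and verify that the resulting rotation counts add up as claimed. Without loss of generality I will assume $m \le n$, so that $\max\{m,n\} = n$; the opposite case reduces to this one by working on the transposed matrix, an adjustment that is legitimate because rotations of rows and columns play symmetric roles (the handling of a potentially fixed rotation direction, which would interact with transposition, is relegated to Section~\ref{sec:conclusions} and is not part of this theorem's statement).

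Given a sortable $m \times n$ input $A$ with $m \le n$, the algorithm (formalized in Algorithm~\ref{alg:sorting}) proceeds in four stages. First it calls \FillColumns, which by Corollary~\ref{cor:near_full} uses $O(mn \log n)$ rotations and produces a configuration in which every column is near-full. Second it calls \FloatMinimums, which by Lemma~\ref{lemma:body_full} uses $O(mn)$ rotations to convert every near-full column into a body-full one. Third it invokes the RadixSort-style procedure of Section~3.3, which by Lemma~\ref{lemma:body-sorted} uses $O(mn \log n)$ rotations to turn every body-full column into a body-sorted one. At this point every column is sorted except possibly for the first row, and the matrix is still sortable; the fourth stage is the procedure of Lemma~\ref{lemma:sorting_R}, which sorts the first row in $O(mn)$ additional rotations, completing the sorting of $A$.

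The correctness of the pipeline follows because each successive procedure requires as input exactly the invariant guaranteed by its predecessor (near-fullness, then body-fullness, then body-sortedness). Summing the four bounds gives a total of
\[
O(mn \log n) + O(mn) + O(mn \log n) + O(mn) \;=\; O(mn \log n) \;=\; O(mn \log \max\{m,n\}),
\]
as desired. Since all stages use only unit rightward row rotations and unit downward column rotations, the move model required by the theorem is respected.

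There is no real obstacle here: every hard step has already been isolated in the earlier lemmas. The only minor care needed is (i) justifying the WLOG assumption $m \le n$ by appealing to the symmetry between rows and columns under transposition, and (ii) observing that the final stage (Lemma~\ref{lemma:sorting_R}) implicitly relies on the sortability of $A$, through Remark~\ref{remark:sortable_characterization}, in order to reduce the odd-parity case of $\pi_R$ to the even one by a preliminary rotation of $R$ when $m$ is odd.
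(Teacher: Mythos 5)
Your proof is correct and follows exactly the same structure as the paper's: reduce to $m \le n$ via transposition, then chain Corollary~\ref{cor:near_full}, Lemma~\ref{lemma:body_full}, Lemma~\ref{lemma:body-sorted}, and Lemma~\ref{lemma:sorting_R}, summing the rotation counts. Your two caveats (the transposition WLOG and the use of sortability in the odd-parity case of $\pi_R$) are both accurate and align with the paper's discussion around Algorithm~\ref{alg:sorting} and Lemma~\ref{lemma:sorting_R}.
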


\begin{algorithm}[t]
    \caption{Algorithm for solving the Torus Puzzle.}
    \label{alg:sorting}
    \SetKwInput{Input}{Input}
    \SetKwInput{Output}{Output (in place)}
    \Input{A sortable $m \times n$ matrix $A$ with $m \le n$ (consider $A^T$ for $m > n$).}
    \Output{The sorted version of $A$.}
    \BlankLine

    Call \FillColumns on $A$\;
    Call \FloatMinimums on $A$\;
    \BlankLine

    $k \gets \big\lfloor \frac{n}{m-1} \big\rfloor$\;
    \For{$h = 0, 1, \dots, \lceil \frac{n}{k} \rceil - 1$}{
        Call \RadixSortBodies on columns $hk + 1, hk+2, \dots, \min\{(h+1)k, n\}$\;
    }
    \BlankLine

    \If{$n$ is even $\wedge$ the permutation induced by $R$ is odd\label{ln:sorting_R_start}}{
        Rotate $R$\;
    }

    \BlankLine

    $\pi_R \gets $ permutation induced by $R$\;
    \If{$\pi_R$ is even}{
        $\sigma, \upsilon \gets$ Two involutions s.t.\ $\pi^{-1}_R = \sigma \upsilon$ and $a_2(\sigma) = a_2(\upsilon)$\;
        Call $\SwapPairs$ with $d=\texttt{U}$ to apply $\sigma$ to $R$\;
        Call $\SwapPairs$ with $d=\texttt{D}$ to apply $\upsilon$ to $R$\;
    }
    \Else{
        \label{ln:else_pi_r_odd}
        $\sigma, \upsilon, \tau \gets$ Two involutions and a transposition s.t.\ $\pi^{-1}_R = \sigma \upsilon \tau$ and $a_2(\sigma) = a_2(\upsilon)$\;
        Call $\SwapPairs$ with $d=\texttt{U}$ to apply $\sigma$ to $R$\;
        Call $\SwapPairs$ with $d=\texttt{D}$ to apply $\upsilon$ to $R$\;
        Call $\SwapPairs$ with $d=\texttt{U}$ to apply $\tau$ to $R$\;
        \BlankLine
        $\pi_1 \gets $ permutation induced by $C_1$\;
        $\sigma_1, \upsilon_1 \gets$ Two involutions s.t.\ $\pi_1^{-1} = \sigma_1 \upsilon_1$ and $a_2(\sigma_1) = a_2(\upsilon_1)$\;
        Call $\SwapPairs$ with $d=\texttt{U}$ to apply $\sigma_1$ to $C_1$\;
        Call $\SwapPairs$ with $d=\texttt{D}$ to apply $\upsilon_1$ to $C_1$\;
    }
\end{algorithm}

\section{Concluding remarks and open problems}
\label{sec:conclusions}

\subparagraph*{Remarks on the time-complexity of our algorithm.} 
An implementation of the procedures used in our algorithm which closely follows the provided pseudocodes can result in a running time that is larger than our $O(mn \cdot  \log \max\{m, n\})$ upper bound on the push number.

Here, we briefly sketch how the algorithm can be implemented to also run in time $O(mn \cdot \log \max\{m, n\})$. We start by observing that, when $m \le n$, the only row rotated by procedures \FillColumns, \FloatMinimums, \RadixSortBodies, and \SwapPairs is $R$. 
Then, a data structure that maintains the current configuration of $A$ while allowing constant-time element access, and constant-time (not necessarily unit) rotations of $R$ and $C_1, \dots, C_j$ can be obtained by storing $R$ and each of the column bodies in circular arrays equipped with pointers to their first elements.
Moreover, for a given column $C_j$ and at any point during the execution of \FillColumns, \FloatMinimums, each (sub-)phase of \RadixSortBodies, and each of the three phases of \SwapPairs, it is easy to know when the next rotation of $C_j$ will occur. The idea is that of introducing a concept of time, which is determined by the number of rotations of $R$ performed so far, while maintaining a min-heap $H$ storing each of column index $j$ with a key equal to the time of the next rotation of $C_j$ (if any). Since the next rotation of a column (if it exists) is never more than $n$ time-instants away, $H$ can be implemented to support insertions in constant-time, and extractions of the minimum in time proportional to the number of time-instants between the current time and the one given by the key of the extracted element.
With these ingredients, all the above procedures can be implemented to run in the same asymptotic time as their respective upper bounds on the number of rotations, and the same follows for the whole sorting algorithm (which, for $m \le n$, repeatedly invokes these procedure while transposing $A$ at most once).

\subparagraph*{Reducing the number of distinct rotation types.}
When our algorithm is executed on an $m \times n$ input matrix with $m \le n$, all rotations involve $R$ or the columns $C_1, \dots, C_n$ up until the very end, when \SwapPairs might be invoked to apply the permutation $\pi_1$ to $C_1$ (i.e., to the first row of the transposed version of $A$). This last step may rotate rows other than $R$.

One might wonder whether it is possible to avoid rotating these rows, since doing so would result in an algorithm that sorts any sortable matrix using no more than $m+1$ distinct rotation types, which is also a lower bound.\footnote{To see that $m$ rotation types are not enough in general, notice that at least one row and at least one column must necessarily be available for rotations. Then, for any such choice of $m$ rotations types, there is always a row-column pair such that neither the row nor the column can be rotated, which implies that the element on their intersection cannot be moved from its initial position.}
This can be by using the following alternative handing of the case in which $\pi_R$ is odd (i.e., the ``else'' branch starting at line~\ref{ln:else_pi_r_odd} of Algorithm~\ref{alg:sorting}): after decomposing $\pi_R^{-1}$ into the product $\sigma\upsilon\tau$, call \SwapPairs with alternating choices of $d$ to apply $\sigma$, $\upsilon$, $\tau$, and $(1 \; n)$ to $R$, in this order.
This results in a matrix in which every element is in its target position, except for $R[1]$ and $R[n]$, which need to be swapped.
In Appendix~\ref{apx:swap}, 
we show a sequence of $O(mn)$ unit rotation that performs  exactly this task when $m$ is even (as in the case of interest), while only rotating $R$ and $C_1$.

\subparagraph*{Handling arbitrary rotation directions.}
While the description of our algorithm assumes that all columns can only be rotated downwards, it is easy to check that all our procedures work with either no or with minor modifications even when each column can only be rotated in a direction specified as part of the input (where different columns can have different directions). 
This implies that our upper bound holds even when both rows and columns have input-specified rotation directions. The easiest way to see this is that of combining the previous discussion showing that the only row that needs to be rotated is $R$, with symmetry arguments which allow to equivalently consider a transposed or mirrored version of $A$.\footnote{In this way, one can always assume to be dealing with a matrix with at least as many columns as rows, and where unit rightward rotations of the first row are allowed.}

\subparagraph*{Open problems.}
Following our results, the known lower and upper bounds on the push number differ by a multiplicative $\Theta(\log \max\{m, n\})$ factor. It would be interesting to close this gap by providing asymptotically tight bounds.

Moreover, to the best of our knowledge, it is still unknown whether the optimization version of the Torus Puzzle (that asks to return the shortest sequence of rotations sorting the input matrix) is solvable in polynomial time,  both for the case unit rotations and for the more permissive case of compound rotations. Likewise, no polynomial-time approximation algorithm with an approximation factor better than the trivial $\widetilde{O}(mn)$ is currently known.

\bibliographystyle{plainurl}
\bibliography{bibliography.bib}


\clearpage
\appendix

\section{Swapping two adjacent elements using \texorpdfstring{$O(mn)$}{O(mn)} unit rotations}
\label{apx:swap}

In this section we assume that at least one of $n$ and $m$ is even, and we show how a pair of adjacent elements can be swapped using $O(mn)$ rotations, without affecting the position of any other element.\footnote{Recall that when $n$ and $m$ are both odd, no sequence of rotation can change the parity of the permutation induced by the current configuration, hence swapping a single pair of elements is impossible (see Remark~\ref{remark:sortable_characterization}).}

In the following we only focus on swapping $R[1]$ and $R[n]$.  However, it is easy to see that this actually shows how to swap any pair of adjacent elements using the same asymptotic number of rotations once one observes that: (i) by symmetry, any sequence of rotations swapping $R[1]$ and $R[n]$ also applies to any pair of adjacent positions of the first row (after a suitable renaming of the columns), and (ii) to swap two adjacent elements from the same column $C_j$, one can ``unload'' the elements into $R$, swap them, and then ``load'' the swapped elements back into $C_j$ (which also restores the elements of $R$, in their original order).

This improves over the sequence provided by \cite{AmanoKKKNOTY12}, which perform a similar task using $\Omega(\max\{ m, n \}^2)$ moves in the worst case, and provides explicit sequences for Greenwell's observation that two adjacent elements of an $m \times n$ matrix can be swapped when at least one of $m$ and $n$ is even \cite{Greenwell1999}.

For a sequence of rotations $S$ and a non-negative integer $k$, we use $S^k$ to denote the sequence corresponding to $k$ consecutive repetitions of $S$. We start by handling the case in which $n$ is even.

\begin{lemma}
    \label{lemma:swap_n_even}
    Consider an $m \times n$ matrix $A$ with $m,n \ge 2$ and even $n$.
    The sequence of rotations
    $C_1 R^2 C_1^{m-1} R (C_1 R C_1^{m-1} R)^{\frac{n}{2}-1}$ swaps  $R[1]$ and $R[n]$ while leaving the position of all other elements unaffected.
\end{lemma}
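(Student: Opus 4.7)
The plan is to establish the algebraic identity
\[
T \cdot U^{n/2 - 1} \;=\; (C_1 R C_1^{m-1}) \cdot U^{n/2},
\]
where, for brevity, I write $T = C_1 R^2 C_1^{m-1} R$ and $U = C_1 R C_1^{m-1} R$, and then analyze $C_1 R C_1^{m-1}$ and $U$ as explicit permutations of positions, checking by a short case analysis that their composition realizes exactly the claimed swap of $R[1]$ with $R[n]$ while fixing every other position. The identity itself is easy: inserting $C_1^{m-1} C_1 = \mathrm{id}$ between the two $R$'s of $T$ (recall $C_1$ has order $m$) gives
\[
T = C_1 R \cdot (C_1^{m-1} C_1) \cdot R C_1^{m-1} R = (C_1 R C_1^{m-1}) \cdot (C_1 R C_1^{m-1} R) = (C_1 R C_1^{m-1}) \cdot U,
\]
so right-multiplying by $U^{n/2 - 1}$ yields the desired formula.

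Next, I would analyze $U$ and $\pi := C_1 R C_1^{m-1}$ as permutations on positions. Since $C_1$ and $R$ act as the identity off column $1$ and row $R$ respectively, both permutations fix every position outside the set $\mathcal{X} = \{(1,1), (1,2), \dots, (1,n), (m,1)\}$. A position-by-position trace (the column-$1$ interior cells $(i,1)$ with $2 \le i \le m-1$ get shifted down by $C_1$ and back up by $C_1^{m-1}$, so they are fixed) then yields:
\begin{itemize}
    \item $\pi$ fixes $(1, 1)$ and acts on the remaining $n$ positions of $\mathcal{X}$ as the single cycle $\bigl( (1,2) \; (1,3) \; \dots \; (1,n) \; (m,1) \bigr)$;
    \item $U$ acts on all of $\mathcal{X}$ as the single length-$(n+1)$ cycle
    \[
    \bigl( (1,1) \; (1,2) \; (1,4) \; (1,6) \; \dots \; (1,n) \; (m,1) \; (1,3) \; (1,5) \; \dots \; (1,n-1) \bigr).
    \]
\end{itemize}
Here the parity assumption on $n$ is what lets the even and odd row-$1$ columns partition cleanly along $U$'s cycle. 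Assigning the consecutive indices $0, 1, \dots, n$ to the positions of $\mathcal{X}$ along the $U$-cycle in the order displayed above turns the action of $U^{n/2}$ into an index shift by $n/2$ modulo $n+1$.

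Finally, I would read off the effect of $\pi \cdot U^{n/2}$ on every position of $\mathcal{X}$: apply $\pi$ first, then advance by $n/2$ along the cycle. For example, $e_{1,1}$ is fixed by $\pi$ (index $0$) and advances to index $n/2$, namely $(1,n)$; symmetrically, $e_{1,n}$ is sent by $\pi$ to $(m,1)$ (index $n/2+1$) and then to index $n/2+1+n/2 \equiv 0 \pmod{n+1}$, which is $(1,1)$. A parallel index-arithmetic check shows that $e_{m,1}$ (at $(m,1)$, sent by $\pi$ to index $1$, then to index $n/2+1$, back to $(m,1)$) and each $e_{1,j}$ for $2 \le j \le n-1$ return to their original positions; the latter splits into two parity cases whose modular arithmetic collapses to the identity in both situations. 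Together with the fact that $\pi$ and $U$ fix everything outside $\mathcal{X}$, this proves that the stated sequence swaps $R[1]$ and $R[n]$ and nothing else.

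The main obstacle is the index-arithmetic bookkeeping for the elements that ought to return home: one has to verify that $U$'s detour through $(m,1)$ lands at exactly index $n/2+1$, and then that the subsequent $n/2$-step shift sends $(1,j+1)$ back to $(1,j)$ for both parities of $j$. The evenness of $n$ is implicit throughout: it is what guarantees that $n/2$ is an integer and that the alternating even/odd structure of $U$'s cycle is compatible with the single off-row position $(m,1)$ sitting exactly between the two halves.
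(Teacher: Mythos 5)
Your proof is correct, and it takes a genuinely different route from the paper's. You factor $X = C_1 R^2 C_1^{m-1} R$ as $\pi U$ with $\pi = C_1 R C_1^{m-1}$, so the entire sequence collapses to $\pi\, U^{n/2}$; you then identify $\pi$ and $U$ explicitly as permutations of the $n+1$ positions $\mathcal{X} = R \cup \{C_1[m]\}$ (with $U$ a single $(n+1)$-cycle) and finish by modular index arithmetic along that cycle. The paper keeps the sequence as $X\,Y^{n/2-1}$ and traces each starting position $R[i]$ through $X$ and suitable powers of $Y$, splitting into six cases on $i$ and its parity, with $n=2$ handled separately at the outset. Your factorization buys several things: the role of even $n$ becomes transparent (it is precisely what places $(m,1)$ at index $n/2+1$, the antipode of $(1,1)$ on the $(n+1)$-cycle), the case $n=2$ is absorbed into the general argument rather than special-cased, and the correctness check reduces to a few short congruences modulo $n+1$ rather than a longer enumeration. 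The paper's case analysis is more laborious but reuses the $p \xrightarrow{S} q$ element-tracking notation set up earlier, so it is stylistically of a piece with the rest of the paper. One small point worth making explicit in a final write-up: the cancellation $C_1^{m-1}C_1 = \mathrm{id}$ that enables your factorization holds because $C_1$ has order $m$ as a permutation, which is exactly why both $\pi$ and $U$ fix the interior cells $C_1[2],\dots,C_1[m-1]$.
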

\begin{proof}
    Let $X = C_1 R^2 C_1^{m-1} R$ and $Y = C_1 R C_1^{m-1} R$, so that the sequence can be written as $X Y^{\frac{n}{2} - 1}$.
    We start by handling the case $n=2$. In this case, the sequence $X = C_1 R^2 C_1^{m-1} R$ is equivalent to $C_1 C_1^{m-1} R$ which, in turn, is equivalent to $R$. Moreover, $Y^{\frac{n}{2} - 1}$ is the empty sequence. Then, the sequence of rotations $X Y^{\frac{n}{2} -1}$ yields the same configuration as rotation $R$, which is easily checked to have the claimed effect. Hence, in the rest of the proof we will assume $n \ge 4$.

    Since neither $X$ nor $Y$ ever rotate any column $C_j$ with $j \ge 2$, we conclude that the position of all elements in the bodies of $C_2, \dots, C_n$ are not affected by $XY^{\frac{n}{2}-1}$.
    Moreover, since the effect of $C_1^{m-1}$ is that of performing a cyclic upward rotation of column $C_1$, we have that both $X$ and $Y$ perform one downward rotation of $C_1$ followed by (the equivalent of) one upward rotation, hence they do not affect the position of any element $C_1[i]$ for $i \in \{2, \dots, n-1\}$ and the same holds for $XY^{\frac{n}{2}-1}$. 

    We now argue that $XY^{\frac{n}{2}-1}$ places the elements initially in $R$ in the right final positions, which suffices to imply the claim (indeed, the only remaining element $C_1[m]$ must be correctly placed by a process of elimination). We start with some observations:
    \begin{itemize}
        \item $R[1] \xrightarrow{X} R[2]$,
            $R[i] \xrightarrow{X} R[i+3]$ for $i \in \{2, \dots, n-3\}$,
            $R[n-2] \xrightarrow{X} R[1]$,  $R[n-1] \xrightarrow{X} C_1[m]$, and $R[n] \xrightarrow{X} R[3]$.
        \item We have $R[i] \xrightarrow{Y} R[i+2]$ for any $i \in \{2,\dots, n-2\}$. Hence, for any $i \in \{2, \dots, n\}$ and any integer $h \le \frac{n-i}{2}$, we have $R[i] \xrightarrow{Y^h} R[i + 2h]$.
    \end{itemize}

    \noindent The rest of the proof depends on the values of $i$ and $n$.
    \begin{itemize}
        \item If $i = 1$, then $R[1] \xrightarrow{X} R[2] \xrightarrow{Y^{\frac{n}{2}-1}} R[n]$. 

        \item If $2 \le i \le n-4$ and $i$ is even, let $h = \frac{n-i-4}{2}$, $k = \frac{i-2}{2}$, and observe that $XY^{\frac{n}{2} - 1} = X Y^{h} YY Y^{k}$.
            We have $R[i]\xrightarrow{X} R_i[i+3] \xrightarrow{X} R[n - 1]$ since $i + 3 + 2h = n - 1$, $R[n-1] \xrightarrow{Y} R[1] \xrightarrow{Y} R[2]$, and $R[2] \xrightarrow{Y^k} R[i]$ since $2 + 2k = i$.  

        \item If $3 \le i \le n-3$ and $i$ is odd, let $h = \frac{n-i-3}{2}$, $k = \frac{i-3}{2}$, and observe that $XY^{\frac{n}{2} - 1} = X Y^{h} YY Y^{k}$.
            We have $R[i] \xrightarrow{X} R[i+3] \xrightarrow{Y^h} R[n]$ since $i + 3 + 2h = n$, 
            $R[n] \xrightarrow{Y} C_1[m] \xrightarrow{Y} R[3]$, and
            $R[3] \xrightarrow{Y^k} R[i]$ since $3+2k = i$. 

        \item If $i = n-2$, let $h = \frac{n-4}{2}$ and observe that $XY^{\frac{n}{2} - 1} = XYY^h$. We have $R[n-2] \xrightarrow{X} R[1] \xrightarrow{Y} R[2] R[2] \xrightarrow{Y^h} R[n-2]$ since $2 + 2h = n-2$. It follows that $R[n-2] \xrightarrow{XYY^h} R[n-2]$. 

        \item If $3 \le i = n-1$, let $h = \frac{n-4}{2}$ and observe that $XY^{\frac{n}{2} - 1} = XYY^h$.
            We have $R[n-1] \xrightarrow{X} C_1[m] \xrightarrow{Y} R[3] \xrightarrow{Y^h} R[n-1]$ since $3 + 2h = n-1$. 

        \item If $i=n$, let $h=\frac{n-4}{2}$ and observe that $XY^{\frac{n}{2}-1} = XY^hY$. We have $R[n] \xrightarrow{X} R[3] \xrightarrow{Y^h} R[n-1]$ since $3 + 2h = n-1$, and $R[n-1] \xrightarrow{Y} R[1]$. \qedhere
    \end{itemize}
\end{proof}

\noindent We now consider the case in which $m$ is even.

\begin{lemma}
Consider an $m \times n$ matrix $A$ with $m,n \ge 2$ and even $m$.
    The sequence of rotations
    $C_1^{m-1} R^2 C_1^2 R^{n-1} C_1 (R C_1 R^{n-1} C_1)^{\frac{m}{2}-1} R^{n-1} C_1$ swaps $R[1]$ and $R[n]$ while leaving the position of all other elements unaffected.
\end{lemma}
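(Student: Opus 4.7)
The plan is to mimic the structure of Lemma~\ref{lemma:swap_n_even}. I would decompose the claimed sequence as $X' (Y')^{\frac{m}{2}-1} Z'$, where $X' = C_1^{m-1} R^2 C_1^2 R^{n-1} C_1$, $Y' = R C_1 R^{n-1} C_1$, and $Z' = R^{n-1} C_1$. Since only $R$ and $C_1$ are rotated, every element lying outside $R$ and the body of $C_1$ stays put, and it suffices to understand the induced permutation on the $n+m-1$ positions of $P := \{R[1], \dots, R[n]\} \cup \{C_1[2], \dots, C_1[m]\}$.

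First, I would compute the action of each of $X'$, $Y'$, and $Z'$ on $P$ by composing their primitive rotations one at a time, treating $C_1^{m-1}$ as a single upward rotation of $C_1$ and $R^{n-1}$ as a single leftward rotation of $R$. The qualitative fact that drives the proof, in analogy with the role of $Y$ in Lemma~\ref{lemma:swap_n_even}, is that $Y'$ fixes $R[2], \dots, R[n-1]$ and acts as a single $(m+1)$-cycle on $Q := \{R[1], R[n], C_1[2], \dots, C_1[m]\}$; more precisely, it sends $R[1] \to C_1[2] \to C_1[4] \to \cdots \to C_1[m] \to R[n] \to C_1[3] \to C_1[5] \to \cdots \to C_1[m-1] \to R[1]$. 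The two ``halves'' of this cycle interleave the even- and odd-indexed body entries of $C_1$, and the hypothesis that $m$ is even guarantees that $C_1[m]$ lies at the end of the even half, so the cycle closes up correctly.

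With this description, $(Y')^{\frac{m}{2}-1}$ is a fixed power of an $(m+1)$-cycle. I would then case-analyse each $p \in P$: the cases are $p = R[1]$, $p \in \{R[j] : 2 \le j \le n-2\}$, $p = R[n-1]$, $p = R[n]$, and the body entries of $C_1$ split by the parity of their row index (with the two boundary entries $C_1[m-1]$ and $C_1[m]$ treated separately). In each case, I would compute where $X'$ sends $p$, identify the corresponding index on the cycle, advance by $\frac{m}{2}-1$ modulo $m+1$, and finally apply $Z'$. The explicit check on $m = n = 4$ (where $\frac{m}{2}-1 = 1$) already predicts the outcome: every position returns to itself except $R[1]$ and $R[n]$, which get exchanged.

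The main obstacle will be the bookkeeping of the cycle indices, which must be reduced modulo $m+1$ and respect the parity interleaving induced by $Y'$; showing that the offset produced by $X'$, the $(\frac{m}{2}-1)$-fold shift of $Y'$, and the final step of $Z'$ sum precisely to a transposition of $R[1]$ and $R[n]$ requires a careful parity-based case split. Small values of $m$ also need attention: for $m = 2$, both $C_1^2$ and $(Y')^0$ collapse to the identity, so the sequence reduces to $C_1 R^{n+1} C_1 R^{n-1} C_1$, which must be verified either directly or by checking that the general formulas go through under the convention that empty index ranges impose no constraints.
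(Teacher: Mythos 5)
Your plan is sound and, as far as I can check, correct in its key structural claim: the decomposition $X'(Y')^{\frac{m}{2}-1}Z'$ is the natural one, $Y'$ does fix $R[2],\dots,R[n-1]$, and the $(m+1)$-cycle you describe on $Q=\{R[1],R[n],C_1[2],\dots,C_1[m]\}$ is exactly the action of $Y'$. However, this is a genuinely different route from the paper's, and a considerably more laborious one. The paper isolates the middle portion $X = R\, C_1^2\, R^{n-1}\, C_1\, (R\, C_1\, R^{n-1}\, C_1)^{\frac{m}{2}-1}$ of the given sequence and notices that swapping the roles of $R$ and $C_1$ in $X$ produces precisely the Lemma~\ref{lemma:swap_n_even} sequence for the transposed $n\times m$ matrix $A^T$, whose ``row length'' $m$ is even. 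Lemma~\ref{lemma:swap_n_even} applied to $A^T$ therefore shows that $X$ swaps $C_1[1]$ and $C_1[m]$ in $A$ with no other side effects, and the outer rotations $C_1^{m-1}\,R\,(\cdots)\,R^{n-1}\,C_1$ are a short conjugation that transports this transposition from $\{C_1[1],C_1[m]\}$ to $\{R[1],R[n]\}$, needing only a quick check on a handful of positions. Your approach is self-contained but re-derives from scratch essentially the cycle-tracking work that Lemma~\ref{lemma:swap_n_even} already encapsulates; the paper's approach buys brevity and robustness by exploiting the $R\leftrightarrow C_1$ symmetry, which is exactly why the even-$n$ case was proved first. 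If you do carry out your version, you will also want to attend to the degenerate ranges that arise when $n\in\{2,3\}$, not just when $m=2$: for instance, $R^2$ collapses to the identity when $n=2$, and some of your intermediate positions (e.g.\ $R[j+2]$ for $R[j]\xrightarrow{R^2}R[j+2]$) only make sense for $n\ge 4$.
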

\begin{proof}
    Define the sequence
    $X = R C_1^2 R^{n-1} C_1 (R C_1 R^{n-1} C_1)^{\frac{m}{2}-1}$
    and the sequence
    $X^T = C_1 R^2 C_1^{n-1} R (C_1 R C_1^{n-1} R)^{\frac{m}{2}-1}$,
    which has been obtained from $X$ by replacing each occurrence of $R$ with $C_1$ and vice versa.
    The sequence in the claim corresponds to $C_1^{m-1} R X R^{n-1} C_1$.

    Consider the $n \times m$ matrix $A^T$ obtained by transposing $A$, and let $R^T$ be the first row of $A^T$.
    By Lemma~\ref{lemma:swap_n_even}, applying the sequence of rotations $X^T$ to $A^T$ swaps $R^T[1]$ and $R^T[m]$ without modifying the position of the other elements (notice that the roles of $n$ and $m$ are exchanged in $X^T$ compared to the statement of Lemma~\ref{lemma:swap_n_even} to account for the dimensions of $A^T$). Then, applying $X$ to $A$ has the sole effect of swapping $C_1[1]$ and $C_1[m]$. 

    We now argue that $C_1^{m-1} R X R^{n-1} C_1$ swaps $R[1]$ and $R[n]$. Indeed, using the fact that $R[1]$ and $C_1[1]$ refer to the same element, we have:
    \begin{itemize}
        \item $R[1] \xrightarrow{C_1^{m-1}} C_1[m] \xrightarrow{R} C_1[m] \xrightarrow{X} C_1[1] \xrightarrow{R^{n-1}} R[n] \xrightarrow{C_1} R[n]$;

        \item $R[n] \xrightarrow{C_1^{m-1}} R[n] \xrightarrow{R} R[1] \xrightarrow{X} C_1[m] \xrightarrow{R^{n-1}} C_1[m] \xrightarrow{C_1} C_1[1]$.
    \end{itemize}

    To conclude the proof, we now show that the final positions of all other elements are not affected by the sequence of rotations. This is trivial for elements that are neither in $R$ nor in $C_1$ (since neither their rows nor their columns are ever rotated), while for the remaining elements we distinguish the following cases:
    \begin{itemize}
        \item $C_1[2] \xrightarrow{C_1^{m-1}} C_1[1] \xrightarrow{R} R[2] \xrightarrow{X} R[2] \xrightarrow{R^{n-1}} R[1] \xrightarrow{C_1} C_1[2]$;
    
        \item If $i \in \{3, \dots, m\}$, $C_1[i] \xrightarrow{C_1^{m-1}} C_1[i-1] \xrightarrow{R} C_1[i-1] \xrightarrow{X} C_1[i-1] \xrightarrow{R^{n-1}} C_1[i-1] \xrightarrow{C_1}[i]$;

        \item If $j \in \{2, \dots, n-1\}$, $R[j] \xrightarrow{C_1^{m-1}} R[j] \xrightarrow{R} R[j+1] \xrightarrow{X} R[j+1] \xrightarrow{R^{n-1}} R[j] \xrightarrow{C_1} R[j]$. \qedhere
    \end{itemize}
\end{proof}

\end{document}